\newif\ifproc
\renewcommand{\paragraph}[1]{\smallskip\noindent\textbf{\textsf{#1}}}
\setlist[enumerate]{nosep}
\let\doendproof\endproof
\renewcommand\endproof{~\hfill$\qed$\doendproof}
\spnewtheorem*{sketch}{Proof sketch}{\itshape}{\rmfamily}
\newcounter{casecounter}
\newcounter{subcasecounter}
\newcounter{subsubcasecounter}
\newcommand{\ccase}[1]{%
  \stepcounter{casecounter}%
  \setcounter{subcasecounter}{0}%
  \setcounter{subsubcasecounter}{0}%
  \protected@write \@auxout {}{\string \newlabel {#1}{{\thecasecounter}{\thepage}{\thecasecounter}{#1}{}} }%
  \hypertarget{#1}{\noindent\textbf{Case \thecasecounter.}}
}
\newcommand{\subcase}[1]{%
  \stepcounter{subcasecounter}%
  \setcounter{subsubcasecounter}{0}%
  \protected@write \@auxout {}{\string \newlabel {#1}{{\thecasecounter.\thesubcasecounter}{\thepage}{\thecasecounter.\thesubcasecounter}{#1}{}} }%
  \hypertarget{#1}{\noindent\textbf{Case \thecasecounter.\thesubcasecounter.}}
}
\newcommand{\subsubcase}[1]{%
  \stepcounter{subsubcasecounter}%
  \protected@write \@auxout {}{\string \newlabel {#1}{{\thecasecounter.\thesubcasecounter.\thesubsubcasecounter}{\thepage}{\thecasecounter.\thesubcasecounter.\thesubsubcasecounter}{#1}{}} }%
  \hypertarget{#1}{\noindent\textbf{Case \thecasecounter.\thesubcasecounter.\thesubsubcasecounter.}}
}
\newcommand{\V}{\mathcal{V}}
\title{Drawing Subcubic 1-Planar Graphs\\with Few Bends, Few Slopes, and Large Angles}
\author{Philipp Kindermann\inst{1}\ifproc$^{\textrm{(\Letter)}}$\orcidID{0000-0001-5764-7719}\fi
\and Fabrizio Montecchiani\inst{2}
\and Lena Schlipf\inst{3}
\and Andr\'e~Schulz\inst{3}}
\institute{University of Waterloo, Canada,
	\email{philipp.kindermann@uwaterloo.ca}
	\and
	Universit\`a degli Studi di Perugia, Italy,
	\email{fabrizio.montecchiani@unipg.it}
	\and
  FernUniversit\"at in Hagen, Germany, 
  \email{firstname.lastname@fernuni-hagen.de}}
\authorrunning{P.~Kindermann, F. Montecchiani, L. Schlipf, and A. Schulz}
\titlerunning{Subcubic 1-Planar Graphs with Few Bends, Few Slopes, Large Angles} 
\begin{document}
\maketitle
\begin{abstract}
We show that the $1$-planar slope number of $3$-connected cubic $1$-planar graphs is at most $4$ when edges are drawn as polygonal curves with at most $1$ bend each. 
This bound is obtained by drawings whose vertex and crossing resolution is at least $\pi/4$.  
On the other hand, if the embedding is fixed, then there is a  $3$-connected cubic $1$-planar graph 
that needs $3$ slopes when drawn with at most $1$ bend per edge.
We also show that $2$ slopes always suffice for $1$-planar drawings of subcubic $1$-planar graphs with at most $2$ bends per edge. 
This bound is obtained with vertex resolution $\pi/2$ and the drawing is RAC (crossing resolution $\pi/2$).
Finally, we prove lower bounds for the slope number of straight-line $1$-planar drawings in terms of
number of vertices and  maximum degree.
\end{abstract}


\section{Introduction}\label{sec:intro}
A graph is \emph{$1$-planar} if it can be drawn in the plane such that each 
edge is crossed at most once. The notion of $1$-planarity naturally extends 
planarity and received considerable attention since its first introduction by 
Ringel in 1965~\cite{R65}, as witnessed by recent surveys~\cite{DBLP:journals/corr/abs-1804-07257,DBLP:journals/csr/KobourovLM17}. 
Despite the efforts made in the study of $1$-planar graphs, only few results are known 
concerning their geometric representations (see, 
e.g.,~\cite{DBLP:conf/gd/AlamBK13,DBLP:journals/tcs/BekosDLMM17,DBLP:journals/comgeo/Brandenburg18,DBLP:journals/algorithmica/GiacomoDELMMW18}). 
In this paper, we study the 
existence of $1$-planar drawings that simultaneously satisfy the following 
properties: edges are polylines using few bends and few distinct slopes for 
their segments, edge crossings occur at large angles, and pairs of edges incident 
to the same vertex form large angles. For example, Fig.~\ref{fig:intro} shows 
a $1$-bend drawing of a $1$-planar graph (i.e., a drawing in which each edge 
is a polyline with at most one bend) using $4$ distinct slopes, such that 
edge crossings form angles at least $\pi/4$, and the angles 
formed by edges incident to the same vertex are at least $\pi/4$. In 
what follows, we briefly recall known results concerning the problems of 
computing polyline drawings with few bends and few slopes or with few bends 
and large~angles. 

\paragraph{Related work.} The \emph{$k$-bend (planar) slope number} of a 
(planar) graph $G$ with maximum vertex degree $\Delta$ is the minimum number 
of distinct edge slopes needed to compute a (planar) drawing of $G$ such that 
each edge is a polyline with at most $k$ bends. When $k=0$, this 
parameter is simply known as the \emph{(planar) slope number} of $G$. 
Clearly, if $G$ has maximum vertex degree $\Delta$, at least $\lceil \Delta/2 \rceil$ 
slopes are needed for any~$k$.
While there exist non-planar graphs with $\Delta \geq 5$ whose slope number 
is unbounded with respect to $\Delta$~\cite{DBLP:journals/combinatorics/BaratMW06,DBLP:journals/combinatorics/PachP06}, 
Keszegh et al.~\cite{DBLP:journals/siamdm/KeszeghPP13} proved that the planar slope number 
is bounded by $2^{O(\Delta)}$. Several authors improved this bound for 
subfamilies of planar graphs (see, e.g.,~\cite{DBLP:journals/gc/JelinekJKLTV13,DBLP:journals/comgeo/KnauerMW14,DBLP:conf/gd/LenhartLMN13}).  

Concerning $k$-bend drawings, Angelini et al.~\cite{DBLP:conf/compgeom/AngeliniBLM17} proved 
that the $1$-bend planar slope number is at most $\Delta-1$, while Keszegh et 
al.~\cite{DBLP:journals/siamdm/KeszeghPP13} proved that the $2$-bend planar 
slope number is $\lceil \Delta/2 \rceil$ (which is tight). Special attention
has been paid in the literature to the slope number of \emph{(sub)cubic} 
graphs, i.e., graphs having vertex degree (at most) 3. Mukkamala and P{\'a}lv{\"o}lgyi 
showed that the four slopes $\{0, \frac{\pi}{4}, \frac{\pi}{2}, \frac{3\pi}{4}\}$ 
suffice for every cubic graph~\cite{DBLP:conf/gd/MukkamalaP11}. 
For planar graphs, Kant and independently Dujmovi\'{c} et al. proved that 
cubic $3$-connected planar graphs have planar slope number $3$ disregarding 
the slopes of three edges on the outer face~\cite{DBLP:journals/comgeo/DujmovicESW07,k-hgd-wg91}, 
while Di Giacomo et al.~\cite{DBLP:journals/tcs/GiacomoLM18} proved that the planar slope number of 
subcubic planar graphs is $4$. We also remark that the slope number problem 
is related to orthogonal drawings, which are planar and with slopes $\{0, \frac{\pi}{2}\}$~\cite{orthoChapterHandbook}, 
and with octilinear drawings, which are planar and with slopes $\{0, \frac{\pi}{4}, \frac{\pi}{2}, \frac{3\pi}{4}\}$~\cite{DBLP:journals/jgaa/BekosG0015}. 
All planar graphs with $\Delta\le4$ (except 
the octahedron) admit 2-bend orthogonal drawings~\cite{bk-bhogd-cgta98,lms-la2be-DAM98}, 
and planar graphs admit octilinear drawings without bends if 
$\Delta\le 3$~\cite{DBLP:journals/tcs/GiacomoLM18,k-hgd-wg91}, with 1 bend if 
$\Delta\le5$~\cite{DBLP:journals/jgaa/BekosG0015}, and with 2 bends if 
$\Delta\le 8$~\cite{DBLP:journals/siamdm/KeszeghPP13}.

Of particular interest for us is the \emph{$k$-bend $1$-planar slope number} 
of $1$-planar graphs, i.e., the minimum number of distinct edge slopes needed 
to compute a $1$-planar drawing of a $1$-planar graph such that each edge is 
a polyline with at most $k \ge 0$ bends. 
Di Giacomo et al.~\cite{DBLP:journals/jgaa/GiacomoLM15} proved an $O(\Delta)$ upper bound for 
the $1$-planar slope number ($k=0$) of outer $1$-planar graphs, i.e., graphs 
that can be drawn $1$-planar with all vertices on the external boundary. 

Finally,   the \emph{vertex resolution} and the \emph{crossing 
resolution} of a drawing are defined as the minimum angle between two 
consecutive segments incident to the same vertex or crossing, respectively (see, e.g.,~\cite{DBLP:journals/jgaa/DuncanK03,DBLP:journals/siamcomp/FormannHHKLSWW93,DBLP:journals/siamdm/MalitzP94}). 
A drawing is \emph{RAC} (\emph{right-angle crossing}) if its crossing resolution is $\pi/2$. Eades 
and Liotta proved that $1$-planar graphs may not have straight-line RAC 
drawings~\cite{DBLP:journals/dam/EadesL13}, while Chaplick et al.~\cite{DBLP:conf/ewcg/Chaplick18} 
and Bekos et al.~\cite{DBLP:journals/tcs/BekosDLMM17} 
proved that every $1$-planar graph has a $1$-bend~RAC~drawing
that preserves the embedding.

\paragraph{Our contribution.} We prove upper and lower bounds on the $k$-bend $1$-planar slope number of $1$-planar graphs, when $k \in \{0,1,2\}$. Our results are based on  techniques that lead to drawings with large vertex and crossing resolution.

In Section~\ref{sec:3con1bend}, we prove that every $3$-connected cubic $1$-planar graph admits a $1$-bend $1$-planar drawing that uses at most~4 distinct slopes and has both vertex and 
crossing resolution~$\pi/4$. 
In Section~\ref{sec:2bends}, we show that every subcubic $1$-planar graph admits a $2$-bend $1$-planar drawing that uses at most~2 distinct slopes and has both vertex and crossing resolution~$\pi/2$. These bounds on the number of slopes and on the vertex/crossing resolution are clearly worst-case optimal.
In Section~\ref{subsec:lower-1bend}, we give a $3$-connected cubic $1$-plane graph for which any embedding-preserving $1$-bend drawing uses at least~3 distinct slopes. 
The lower bound holds even if we are allowed to change the outer face.
In Section~\ref{subsec:lower-straight}, we present $2$-connected subcubic 1-plane graphs with $n$ vertices such that any embedding-preserving straight-line drawing uses~$\Omega(n)$ distinct slopes,
and $3$-connected $1$-plane graphs with maximum degree $\Delta\geq 3$ such that any embedding-preserving straight-line drawing uses at least $9(\Delta-1)$ distinct slopes, which implies that at least $18$ slopes are needed if $\Delta=3$.

\noindent Preliminaries can be found in Section~\ref{sec:prelim}, while  
open problems are in Section~\ref{sec:open}. 


\section{Preliminaries}\label{sec:prelim}

We only consider \emph{simple} graphs with neither self-loops nor multiple 
edges. A \emph{drawing} $\Gamma$ of a graph $G$  maps each vertex of $G$ to a 
point of the plane and  each edge to a simple open Jordan curve between its 
endpoints. We always refer to \emph{simple} drawings where two edges can 
share at most one point, which is either a common endpoint or a proper intersection. 
A drawing 
divides the plane into topologically connected regions, called \emph{faces}; 
the infinite region is called the \emph{outer face}. For a planar (i.e., crossing-free) drawing, the 
boundary of a face consists of vertices and edges, while for a non-planar 
drawing the boundary of a face may also contain crossings and parts of edges.
An \emph{embedding} of a graph $G$ is an equivalence class 
of drawings of~$G$ that define the same set of faces and the same outer face. 
A \emph{($1$-)plane graph} is a graph with 
a fixed ($1$-)planar embedding. Given a $1$-plane graph $G$, the 
\emph{planarization}~$G^*$ of $G$ is the plane graph obtained by replacing each 
crossing of $G$ with a \emph{dummy vertex}. To avoid confusion, the vertices 
of $G^*$ that are not dummy are called \emph{real}. Moreover, we call 
\emph{fragments} the edges of $G^*$ that are incident to a dummy vertex. 
The next lemma will be used in the following and can be of independent 
interest, as it extends a similar result by Fabrici and Madaras~\cite{DBLP:journals/dm/FabriciM07}. 
\ifproc
The proof is given in the full version~\cite{fullversion}.
\else
The proof is given in Appendix~\ref{app:prelim}.
\fi

\wormhole{crossingMinimal}
\newcommand{\crossingMinimalText}{%
  Let~$G=(V,E)$ be a $1$-plane graph and let $G^*$ be its planarization. 
  We can re-embed~$G$ such that each edge is still crossed at most once and 
  \begin{enumerate*}[label=(\roman*)]
    \item no cutvertex of~$G^*$ is a dummy vertex, and
    \item if~$G$ is 3-connected, then $G^*$ is 3-connected.
  \end{enumerate*}}

\begin{lemma}\label{le:crossing-minimal}
  \crossingMinimalText
\end{lemma}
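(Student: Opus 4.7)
The plan is to prove both parts by a minimization argument. Take a 1-planar embedding $\Gamma$ of $G$ that minimizes the number of crossings (breaking ties by a secondary lexicographic measure for part~(ii)), and show that its planarization $G^*$ automatically satisfies the required properties; otherwise we will reach a contradiction by exhibiting a strictly better embedding.

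For part~(i), suppose that some dummy vertex $d$, corresponding to a crossing of edges $uv$ and $xy$, is a cutvertex of $G^*$. In any planar embedding, the fragments incident to $d$ appear in cyclic order $du, dx, dv, dy$, and a standard fact about planar graphs is that all edges at a cutvertex belonging to the same block must be cyclically consecutive. Hence $du$ and $dv$ lie in different blocks, and likewise $dx$ and $dy$, so both pairs $\{u,v\}$ and $\{x,y\}$ lie in distinct components of $G^* - d$. This means $\{uv, xy\}$ is a 2-edge cut of $G$, splitting it into pieces $A$ (containing $u$ and $x$, after relabeling) and $B$ (containing $v$ and $y$). The restriction of $\Gamma$ to $A$ inherits a 1-planar sub-drawing, and because the fragments $du, dx$ are cyclically adjacent around $d$, the face of $G^*$ between them witnesses that $u$ and $x$ share a common face of $A$'s sub-drawing; symmetrically for $v$ and $y$ in $B$. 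Making those faces outer, I place the two sub-drawings in disjoint disks and route $uv$ and $xy$ as two non-crossing curves through the outer region. The new embedding loses at least the crossing at $d$ (plus any $A$-to-$B$ crossings that existed in $\Gamma$), contradicting minimality.

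For part~(ii), assume $G$ is 3-connected and that $\{a, b\}$ is a 2-separator of $G^*$. If both $a$ and $b$ are real, they already form a 2-separator of $G$, contradicting 3-connectivity. Otherwise at least one of them is a dummy vertex, and the removal of a dummy vertex corresponds to deleting the two edges of $G$ that cross at it; thus $\{a, b\}$ encodes a small mixed vertex/edge separator of $G$, which analogously splits $G$ into two sides. I then perform a surgery in the same spirit as in~(i): re-embed the two sides in disjoint disks, and reconnect them by the at most four edges (those incident to $b$ and those passing through $a$) routed through the outer region without creating new crossings; applying part~(i) after the surgery ensures that no new dummy cutvertex appears, so the secondary measure strictly decreases and we again reach a contradiction. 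The main obstacle I expect is the case analysis in (ii), namely (a) both real, (b) one real and one dummy, (c) both dummy, together with the verification that the re-routings preserve 1-planarity. The key tools are the thin-tube routing around a witnessing path in the relevant component (to avoid any new crossings) and the careful choice of outer face so that the formerly crossing edges can be drawn simultaneously as disjoint curves in the outer region.
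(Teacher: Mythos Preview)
Your overall strategy---take a crossing-minimal 1-planar embedding and show that a dummy cutvertex (resp.\ dummy separation-pair member) would let you eliminate a crossing---is exactly the paper's approach. However, your execution of part~(i) contains a concrete error.

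You write that since the edges at a cutvertex belonging to one block are cyclically consecutive, ``$du$ and $dv$ lie in different blocks, and likewise $dx$ and $dy$''. This does not follow. With the cyclic order $du,dx,dv,dy$ around $d$, the set $\{du,dx,dv\}$ is cyclically consecutive, so a perfectly legal block decomposition is $\{du,dx,dv\}$ versus $\{dy\}$. In that situation $u$ and $v$ sit in the \emph{same} component of $G^*-d$, and your subsequent description (``pieces $A$ containing $u$ and $x$ \dots\ and $B$ containing $v$ and $y$'') breaks down: one of the two edges you want to re-route has both endpoints inside $A$, so the ``two disjoint disks'' picture no longer applies as stated. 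The paper treats exactly this bridge case first (shrink the lone component and slide it past $d$ to kill the crossing) before handling the $2{+}2$ split by flipping one side; you need the analogous extra case.

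For part~(ii) your outline is in the right spirit but is only a sketch, as you yourself note. The paper carries out the case analysis you anticipate (real/dummy for the second vertex of the pair, and how the four neighbours of the dummy vertex distribute among components), each time either deriving a contradiction to $3$-connectivity of $G$ or flipping a component and rerouting along the old fragments to eliminate the crossing. Your ``thin-tube'' rerouting is morally the same operation, but be aware that you must also argue why a component cannot contain, say, $a$ and $c$ (opposite neighbours of the dummy vertex) without contradicting $3$-connectivity; that is where most of the work in~(ii) lies, and your sketch does not yet supply it.
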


A drawing $\Gamma$ is \emph{straight-line} if all its edges are mapped to 
segments, or it is \emph{$k$-bend} if each edge is mapped to a chain of 
segments with at most $k>0$ bends. The \emph{slope} of an edge segment 
of $\Gamma$ is the slope of the line containing this segment. For 
convenience, we measure the slopes by their angle with respect to the $x$-axis. 
Let $\mathcal S=\{\alpha_1,\dots,\alpha_t\}$ be a set of $t$ distinct 
slopes.  The \emph{slope number} of a $k$-bend drawing $\Gamma$ is the number 
of distinct slopes used for the edge segments of $\Gamma$. An edge segment of 
$\Gamma$ uses the \emph{north (N) port} (\emph{south (S) port}) of a vertex~$v$ 
if it has slope $\pi/2$ and $v$ is its bottommost (topmost) endpoint. We 
can define analogously the  \emph{west (W)} and \emph{east (E)} ports with 
respect to the slope $0$, the \emph{north-west (NW)} and  \emph{south-east (SE)} 
ports with respect to slope $3\pi/4$, and the \emph{south-west (SW)} and 
\emph{north-east (NE)} ports with respect to slope $\pi/4$. Any such port 
is \emph{free} for $v$ if there is no edge that attaches to $v$ by using it.

We will use a decomposition technique called \emph{canonical ordering}~\cite{DBLP:journals/algorithmica/Kant96}. 
Let $G=(V,E)$ be a $3$-connected plane graph. Let $\delta = \{\V_1,\dots,\V_K\}$ be an ordered partition of $V$, that is, $\V_1 \cup \dots \cup \V_K = V$ and $\V_i \cap \V_j = \emptyset$ for $i \neq j$. Let~$G_i$ be the subgraph of $G$ induced by $\V_1 \cup \dots \cup \V_i$ and denote by $C_i$ the outer face of $G_i$. 
The partition $\delta$ is a canonical ordering of $G$ if:
(i) $\V_1=\{v_1,v_2\}$, where $v_1$ and $v_2$ lie on the outer face of $G$ and $(v_1,v_2) \in E$.  
(ii) $\V_K = \{v_n\}$, where $v_n$ lies on the outer face of $G$, $(v_1,v_n) \in E$.
(iii) Each $C_i$ ($i > 1$) is a cycle containing $(v_1,v_2)$. 
(iv) Each $G_i$ is $2$-connected and internally $3$-connected, that is, removing any two interior vertices of $G_i$ does not disconnect it. 
(v) For each $i \in \{2, \dots, K-1\}$, one of the following conditions holds:
(a) $\V_i$ is a \emph{singleton} $v^i$ that lies on $C_i$ and has at least one neighbor in $G \setminus G_i$; 
(b) $\V_i$ is a \emph{chain} $\{v^i_1,\dots, v^i_l\}$,  both $v^i_1$ and~$v^i_l$ have exactly one neighbor each in $C_{i-1}$, and $v^i_2, \ldots, v^i_{l-1}$ have no neighbor in~$C_{i-1}$. Since $G$ is $3$-connected, each $v^i_j$ has at least one neighbor in $G \setminus G_i$.

Let $v$ be a vertex in $\V_i$, then its neighbors in $G_{i-1}$ (if $G_{i-1}$ 
exists) are called the \emph{predecessors} of $v$, while its neighbors in 
$G\setminus G_{i}$ (if $G_{i+1}$ exists) are called the \emph{successors} of $v$. In 
particular, every singleton has at least two predecessors and at least one 
successor, while every vertex in a chain has either zero or one predecessor 
and at least one successor. Kant~\cite{DBLP:journals/algorithmica/Kant96} 
proved that a canonical ordering of $G$ always exists and can be computed in $O(n)$ 
time; the technique in~\cite{DBLP:journals/algorithmica/Kant96} is such 
that one can arbitrarily choose two adjacent vertices $u$ and $w$ on the 
outer face so that $u=v_1$ and $w=v_2$ in the computed canonical ordering. 

An $n$-vertex \emph{planar $st$-graph} $G=(V,E)$ 
is a plane acyclic directed graph with a single source $s$ and a single sink~$t$, 
both on the outer face~\cite{DBLP:journals/tcs/BattistaT88}. 
An \emph{$st$-ordering} 
of $G$  is a numbering $\sigma: V 
\rightarrow \{1,2,\dots,n\}$ such that for each edge $(u,v) \in E$, it holds $
\sigma(u) < \sigma(v)$ (thus $\sigma(s)=1$ and $\sigma(t)=n$). 
For an  $st$-graph, an $st$-ordering can be computed in $O(n)$ time 
(see, e.g.,~\cite{DBLP:books/daglib/0023376}) and every biconnected undirected 
graph can be oriented to become a planar $st$-graph (also in linear time).


\section{1-bend Drawings of $3$-connected cubic $1$-planar graphs}\label{sec:3con1bend}

Let $G$ be a $3$-connected $1$-plane cubic graph, and let $G^*$ be its planarization. 
We can assume that $G^*$ is $3$-connected (else we can re-embed~$G$ by Lemma~\ref{le:crossing-minimal}).  
We choose as outer face of $G$ a face containing an edge $(v_1,v_2)$ whose vertices are both real (see Fig.~\ref{fig:g}). 
Such a face exists: If~$G$ has $n$ vertices, then~$G^*$ has fewer 
than $3n/4$ dummy vertices because~$G$ is subcubic.
Hence we find a face in $G^*$ with more real than dummy vertices and hence with two consecutive real vertices.
Let $\delta = \{\V_1,\dots,\V_K\}$ be a canonical ordering of $G^*$, let $G_i$ be the graph obtained by adding the first $i$ sets of $\delta$ 
and let~$C_i$ be the outer face of $G_i$.

Note that a real vertex $v$ of $G_i$ can have at most one successor~$w$ in some 
set $\V_j$ with $j>i$. 
We call~$w$ an \emph{L-successor} (resp., \emph{R-successor}) of~$v$ if~$v$ is 
the leftmost (resp., rightmost) neighbor of $\V_j$ on $C_{i}$.
Similarly, a dummy vertex $x$ of $G_i$ can have at most two successors in 
some sets $\V_j$ and $\V_l$ with $l \ge j > i$.
In both cases, a vertex~$v$ of~$G_i$ having a successor in some 
set $\V_j$ with $j>i$ is called \emph{attachable}.
We call~$v$ \emph{L-attachable} (resp., \emph{R-attachable}) 
if~$v$ is attachable and has
no L-successor (resp., R-successor) in~$G_i$. 
We will draw an upward edge at~$u$ with slope $\pi/4$ (resp., $3\pi/4$) only if 
it is L-attachable (resp., R-attachable).

\begin{figure}[t]
  \centering
  \begin{subfigure}[b]{.23\textwidth}
    \centering
    \includegraphics[page=1]{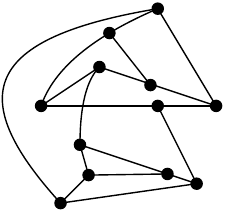}
    \caption{$G$}
    \label{fig:g}
  \end{subfigure}
  \hfil
  \begin{subfigure}[b]{.23\textwidth}
    \centering
    \includegraphics[page=2]{figures/example}
    \caption{$\delta$}
    \label{fig:canord}
  \end{subfigure}
  \hfil
  \begin{subfigure}[b]{.23\textwidth}
    \centering
    \includegraphics[page=3]{figures/example}
    \caption{$uv$-cut}
    \label{fig:cut}
  \end{subfigure}
  \hfil
  \begin{subfigure}[b]{.23\textwidth}
    \centering
    \includegraphics{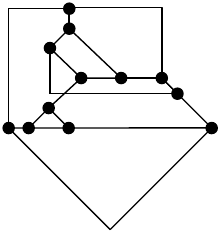}
    \caption{}
    \label{fig:intro}
  \end{subfigure}
  \caption{(a) A $3$-connected $1$-plane cubic graph $G$; 
  (b) a canonical ordering $\delta$ of the planarization $G^*$ of $G$---the real (dummy) vertices are black points (white squares); 
  (c) the edges crossed by the dashed line are a $uv$-cut of $G_5$ with respect to $(u,w)$---the two components have a yellow and a blue background, respectively;
  (d) a $1$-bend $1$-planar drawing with $4$ slopes of~$G$}
\end{figure}

Let $u$ and $v$ be two vertices of $C_i$, for $i>1$. Denote by $P_i(u,v)$ the path of $C_i$ having $u$ and $v$ as endpoints and that does not contain $(v_1,v_2)$. 
Vertices~$u$ and $v$ are \emph{consecutive} if they are both attachable and if $P_i(u,v)$ does not contain any other attachable vertex. 
Given two consecutive vertices $u$ and $v$ of~$C_i$ and an edge $e$ of~$C_i$, a \emph{$uv$-cut of $G_i$ with respect to $e$} is a set of edges of $G_i$ that contains both $e$ and $(v_1,v_2)$ and whose removal disconnects $G_i$ into two components, one containing $u$ and one containing $v$ (see Fig.~\ref{fig:cut}).
We say that~$u$ and~$v$ are \emph{L-consecutive} (resp., \emph{R-consecutive}) if they are consecutive,~$u$ lies to the left (resp., \emph{right}) of~$v$ on~$C_i$, and~$u$ is L-attachable (resp., R-attachable).

We construct an embedding-preserving drawing $\Gamma_i$ of $G_i$, for $i=2,\dots,K$, by adding one by one the sets of $\delta$.  
A drawing $\Gamma_i$ of $G_i$ is \emph{valid}, if: 

\begin{enumerate}[label={\bf P\arabic*}]

\item\label{P1} It uses only slopes in the set $\{0,\frac{\pi}{4},\frac{\pi}{2},\frac{3\pi}{4}\}$;

\item\label{P2} It is a $1$-bend drawing such that the union of any two edge fragments that correspond to the same edge in $G$ is drawn with (at most) one bend in total. 

\end{enumerate}
 
\noindent A valid drawing $\Gamma_K$ of $G_K$ will coincide with the desired drawing of $G$, after replacing dummy vertices with crossing points. 

\paragraph{Construction of $\Gamma_2$.} 
We begin by showing how to draw $G_2$.
We distinguish two cases, based on whether $\V_2$ is a singleton or a chain, as illustrated in Fig.~\ref{fig:g2}.
 
\paragraph{Construction of $\Gamma_i$, for $2 < i < K$.} 
We now show how to compute a valid drawing of $G_i$, for $i=3,\dots,K-1$, by incrementally adding the sets of $\delta$. 

We aim at constructing a valid drawing $\Gamma_i$ that is also \emph{stretchable}, i.e., that satisfies the following two more properties; see Fig.~\ref{fig:p34}. 
These two properties will be useful to prove Lemma~\ref{le:stretch}, which defines a standard way of stretching a drawing by lengthening horizontal segments.

\begin{enumerate}[label={\bf P\arabic*}]
\setcounter{enumi}{2}
\item\label{P3} The edge $(v_1,v_2)$ is drawn with two segments $s_1$ and $s_2$ that meet at a point~$p$. Segment $s_1$ uses the SE port of $v_1$ and $s_2$ uses the SW port of $v_2$. Also, $p$ is the lowest point of $\Gamma_i$, and no other point of $\Gamma_i$ is contained by the two lines that contain $s_1$ and $s_2$.

\begin{figure}[t]
  \begin{minipage}[b]{.7\textwidth}
    \centering
    \begin{subfigure}[b]{.32\textwidth}
      \centering
      \includegraphics[page=1]{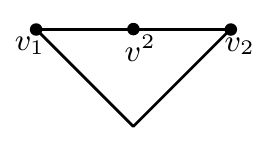}
      \caption{}
      \label{fig:g2-reals}
    \end{subfigure}
    \hfil
    \begin{subfigure}[b]{.32\textwidth}
      \centering
      \includegraphics[page=2]{figures/g2new}
      \caption{}
      \label{fig:g2-dummys}
    \end{subfigure}
    \hfil
    \begin{subfigure}[b]{.32\textwidth}
      \centering
      \includegraphics[page=3]{figures/g2new}
      \caption{}
      \label{fig:g2-chain}
    \end{subfigure}
    \caption{Construction of $\Gamma_2$: (a) $\V_2$ is a real singleton; 
    (b)~$\V_2$ is a dummy singleton; 
    (c) $\V_2$ is a chain.}
    \label{fig:g2}
  \end{minipage}
  \hfil
  \begin{minipage}[b]{.23\textwidth}
    \centering
    \includegraphics[page=4]{figures/example}
    \caption{$\Gamma_i$ is stretchable.}
    \label{fig:p34}
  \end{minipage}
\end{figure}

\item\label{P4}
For every pair of consecutive vertices $u$ and $v$ of $C_i$ with~$u$ left of~$v$
on~$C_i$, it holds that 
\begin{enumerate*}[label=(\alph*)]
	\item If~$u$ is L-attachable (resp., $v$ is R-attachable), then the path $P_i(u,v)$ 
    is such that for each vertical segment $s$ on this path there is a horizontal segment 
    in the subpath before $s$ if $s$ is traversed upwards when going from $u$ to $v$ (resp., from $v$ to $u$); 
  \item if both $u$ and $v$ are real, then~$P_i(u,v)$ contains at least one horizontal segment; and
  \item for every edge~$e$ 
    of $P_i(u,v)$ such that $e$ contains a horizontal segment, there exists a $uv$-cut 
    of $G_i$ with respect to $e$ whose edges all contain a horizontal segment 
    in $\Gamma_i$ except for $(v_1,v_2)$, and such that there exists a $y$-monotone 
    curve that passes through all and only such horizontal segments and 
    $(v_1,v_2)$.
\end{enumerate*}

\end{enumerate}

\wormhole{stretch}
\newcommand{\stretchText}{%
Suppose that $\Gamma_i$ is valid and stretchable, and let $u$ and $v$ be two consecutive vertices of $C_i$.
If~$u$ is L-attachable (resp., $v$ is R-attachable), then
it is possible to modify $\Gamma_i$ such that any half-line with slope $\pi/4$ (resp., $3\pi/4$) that originates at $u$ (resp.,  at $v$) and that intersects the outer face of $\Gamma_i$ does not intersect any edge segment with slope $\pi/2$ of $P_i(u,v)$. Also, the modified drawing is still valid and stretchable.}
\begin{lemma}\label{le:stretch}
  \stretchText
\end{lemma}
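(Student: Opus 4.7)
The plan is to push the right side of $\Gamma_i$ progressively further to the right by stretching carefully chosen $uv$-cuts of horizontal segments, until the half-line $\ell$ from $u$ of slope $\pi/4$ clears every vertical segment of $P_i(u,v)$. Assume $u$ is L-attachable (the case $v$ R-attachable is symmetric via a horizontal reflection). A vertical segment $s$ with $x$-coordinate $x_s$ and top endpoint at height $y_s^{\top}$ can meet $\ell$ only when $x_s - x_u < y_s^{\top} - y_u$, where $(x_u,y_u)$ is the position of $u$; hence it is enough to translate every such $s$ sufficiently far to the right.

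First I would enumerate the maximal vertical segments $s_1,\dots,s_k$ of $P_i(u,v)$ in the order they appear along the traversal from $u$ to $v$. For each $s_j$ that is traversed upward, property \ref{P4}(a) supplies a horizontal segment $h_j$ in the subpath between $u$ and $s_j$, and \ref{P4}(c) then yields a $uv$-cut $\mathcal C_j$ of $G_i$ with respect to the edge containing $h_j$, whose other edges all contain horizontal segments lying on a common $y$-monotone curve $\gamma_j$ together with $(v_1,v_2)$. The stretching operation rigidly translates every object of $\Gamma_i$ strictly to the right of $\gamma_j$ by an amount $D_j>0$ to the right, lengthens each horizontal segment of $\mathcal C_j$ by $D_j$, and lets the bend $p$ of $(v_1,v_2)$ move right by $D_j/2$ and down by $D_j/2$ so that both incident segments keep their SE and SW ports. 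I apply the stretches in order $j=1,\dots,k$, with each $D_j$ chosen large enough that the shifted $s_j$ ends strictly to the right of $\ell$; since $\gamma_{j'}$ lies inside the right component of $\mathcal C_j$ whenever $j'>j$, later stretches only translate $s_j$ further to the right, so conditions established earlier are not broken. Vertical segments of $P_i(u,v)$ that are traversed downward from $u$ to $v$ lie on $P_i(u,v)$ after some upward $s_j$ (by the initial ascent forced by $u$ being L-attachable) and are therefore shifted along with $s_j$, so sufficiently large choices of the $D_j$ also place them to the right of $\ell$.

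The main obstacle is to verify that this sequence of stretches preserves validity and stretchability. Only horizontal lengths change and no new bend is added, so \ref{P1} and \ref{P2} are immediate. For \ref{P3}, the two segments of $(v_1,v_2)$ keep their SE and SW ports, the bend $p$ strictly moves downward at each stretch and thus remains the unique lowest point of the drawing, and the two lines through $p$ remain free of other parts of $\Gamma_i$ since they extend into the outer face below the drawing. For \ref{P4}, every horizontal segment retains its $y$-coordinate and only changes length, so every $y$-monotone curve required by \ref{P4}(c) stays $y$-monotone; moreover, the cyclic order of segments along $P_i(u,v)$ is unchanged, which preserves \ref{P4}(a) and \ref{P4}(b). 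Hence the modified drawing is still valid and stretchable, and the lemma follows.
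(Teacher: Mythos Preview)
Your approach is essentially the same as the paper's: locate a horizontal segment before each obstructing vertical via \ref{P4}(a), take the corresponding $uv$-cut from \ref{P4}(c), and shift the $v$-component rightward while redrawing $(v_1,v_2)$ using \ref{P3}. The paper's organization differs only in that it iterates on the \emph{first} segment of $P_i(u,v)$ hit by the half-line---arguing that this first intersection must be an upward-traversed vertical---rather than pre-enumerating all upward verticals, which lets it avoid your separate (and slightly loose) treatment of downward segments.
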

\begin{sketch}
Crossings between such half-lines and vertical segments of $P_i(u,v)$ can be solved 
by finding suitable $uv$-cuts and moving everything on
the right/left side of the cut to the right/left.
The full proof is given in 
\ifproc
the full version~\cite{fullversion}.
\else
Appendix~\ref{app:3con1bend}
\fi
\end{sketch}

Let $P$ be a set of ports of a vertex $v$; the \emph{symmetric}  set of 
ports~$P'$ of~$v$ is the set of ports obtained by mirroring $P$ at a 
vertical line through $v$.
We say that $\Gamma_i$ is 
\emph{attachable} if the following two properties also apply. 

\begin{enumerate}[label={\bf P\arabic*}]
\setcounter{enumi}{4}
\item\label{P5} At any attachable real vertex~$v$ of $\Gamma_i$, its N, NW, and NE ports are free. 

\item\label{P6} Let $v$ be an attachable dummy vertex of $\Gamma_i$. 
If $v$ has two successors, there are four possible cases for its two used ports, illustrated with two \emph{solid} edges in Fig.~\ref{fig:case-a}--\subref{fig:case-d}.  
If~$v$ has only one successor not in $\Gamma_i$, there are eight possible cases for its three used ports, illustrated with two solid edges plus one \emph{dashed} or one \emph{dotted} edge in Fig.~\ref{fig:case-a}--\subref{fig:case-e} 
\ifproc
.
\else
(see Fig.~\ref{fig:Cxx} in Appendix~\ref{app:3con1bend}). 
\fi

\begin{figure}[t]
\centering
\begin{minipage}[b]{.19\textwidth}
\centering
\includegraphics[width=\textwidth, page=1]{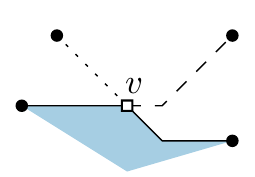}
\subcaption{$C1$}\label{fig:case-a}
\end{minipage}\hfill
\begin{minipage}[b]{.19\textwidth}
\centering
\includegraphics[width=\textwidth, page=2]{figures/cases}
\subcaption{$C2$}\label{fig:case-b}
\end{minipage}
\begin{minipage}[b]{.19\textwidth}
\centering
\includegraphics[width=\textwidth, page=5]{figures/cases}
\subcaption{$C2$ symm.}\label{fig:case-c}
\end{minipage}
\begin{minipage}[b]{.19\textwidth}
\centering
\includegraphics[width=\textwidth, page=3]{figures/cases}
\subcaption{$C3$}\label{fig:case-d}
\end{minipage}
\begin{minipage}[b]{.19\textwidth}
\centering
\includegraphics[width=\textwidth, page=6]{figures/cases}
\subcaption{$C3$ symm.}\label{fig:case-e}
\end{minipage}
\caption{Illustration for \ref{P6}. If $v$ has two successors not in $\Gamma_i$, 
  then the edges connecting~$v$ to its two neighbors in $\Gamma_i$ are solid. If 
  $v$ has one successor in~$\Gamma_i$, then the edge between~$v$ and
  this successor is dashed or dotted.
  \label{fig:cases}}
\end{figure}

\end{enumerate}

\noindent Observe that $\Gamma_2$, besides being valid, is also stretchable and attachable by construction (see also Fig.~\ref{fig:g2}). 
Assume that $G_{i-1}$ admits a valid, stretchable, and attachable drawing $\Gamma_{i-1}$, for some $2 \le i < K-1$; 
we show how to add the next set $\V_i$ of $\delta$ so to obtain a drawing $\Gamma_i$ of $G_i$ that is valid, stretchable and attachable. 
We distinguish between the following cases. 

\ccase{c:singleton} $\V_i$ is a singleton, i.e., $\V_i=\{v^i\}$. Note 
that if $v^i$ is real, it has two neighbors on $C_{i-1}$, while if it is 
dummy, it can have either two or three neighbors on $C_{i-1}$. Let $u_l$ 
and $u_r$ be the first and the last neighbor of $v^i$, respectively, when walking 
along $C_{i-1}$ in clockwise direction from $v_1$. We will call $u_l$ (resp., $u_r$) 
the \emph{leftmost predecessor} (resp., \emph{rightmost predecessor}) of $v^i$.

\subcase{c:singleton-real} Vertex $v^i$ is real. Then, $u_l$ and $u_r$ are its 
only two neighbors in $C_{i-1}$. Each of $u_l$ and $u_r$ can be real or 
dummy. If $u_l$ (resp., $u_r$) is real, we draw $(u_l,v^i)$ (resp., $(u_r,v^i)$) 
with a single segment using the NE port of $u_l$ and the SW port of $v^i$ 
(resp., the NW port of $u_r$ and the SE port of $v^i$). 
If $u_l$ is dummy and 
has two successors not in $\Gamma_{i-1}$, we distinguish between the 
cases of Fig.~\ref{fig:cases} as shown in Fig.~\ref{fig:cases-real-sing}. 
The symmetric configuration of $C3$ is only used for connecting to $u_r$.

\begin{figure}[b]
  \centering
  \begin{subfigure}[b]{.19\textwidth}
    \centering
    \includegraphics[width=\textwidth, page=3]{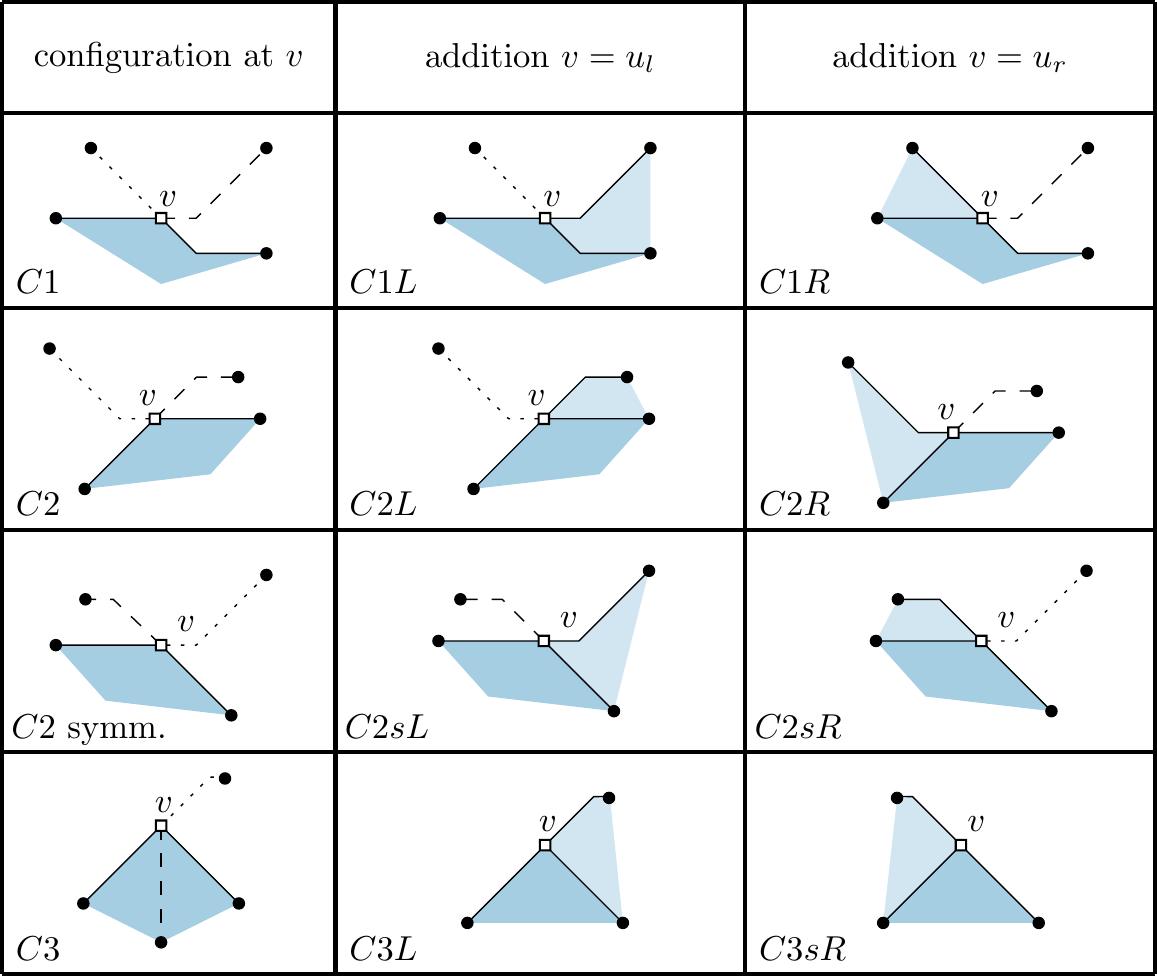}
    \caption{$C1$}\label{fig:case-a-sr}
  \end{subfigure}
  \hfill
  \begin{subfigure}[b]{.19\textwidth}
    \centering
    \includegraphics[width=\textwidth, page=4]{figures/1succ-addition}
    \caption{$C2$}\label{fig:case-b-sr}
  \end{subfigure}
  \hfill
  \begin{subfigure}[b]{.19\textwidth}
    \centering
    \includegraphics[width=\textwidth, page=5]{figures/1succ-addition}
    \caption{$C2$ symm.}\label{fig:case-bsym-sr}
  \end{subfigure}
  \hfill
  \begin{subfigure}[b]{.19\textwidth}
    \centering
    \includegraphics[width=\textwidth, page=6]{figures/1succ-addition}
    \caption{$C3$}\label{fig:case-c-sr}
  \end{subfigure}
  \caption{A real singleton when $u_l$ is dummy with two successors not in $\Gamma_{i-1}$
  \label{fig:cases-real-sing}}
\end{figure}

If $u_l$ is dummy and has one successor not in $\Gamma_{i-1}$, we distinguish 
between the various cases of Fig.~\ref{fig:cases} as indicated in Fig.~\ref{fig:cases-real-sing-2}%
\ifproc
.
\else
(see Fig.~\ref{fig:Cxx} for all cases in Appendix~\ref{app:3con1bend}). 
\fi
Observe that $C1$ requires a local reassignment of 
one port of $u_l$.  The edge $(u_r,v^i)$ is drawn by following a similar case 
analysis%
\ifproc
.
\else
(depicted in Fig.~\ref{fig:Cxx} of Appendix~\ref{app:3con1bend}). 
\fi
Vertex $v^i$ is then placed at the intersection of the lines passing through 
the assigned ports, which always intersect by construction. In particular, 
 the S port is only used when $u_l$ has one successor, but the same 
situation cannot occur when drawing $(u_r,v^i)$.
Otherwise, there is a path of~$C_{i-1}$ from~$u_l$ via its successor~$x$ on $C_{i-1}$ 
to~$u_r$ via its successor~$y$ on~$C_{i-1}$. Note that $x=y$ is possible but $x\neq u_r$.
Since the first edge on this path 
goes from a predecessor to a successor and the last edge goes from a 
successor to a predecessor, there has to be a vertex~$z$ without a successor 
on the path; but then~$u_l$ and~$u_r$ are not consecutive. To avoid crossings 
between  $\Gamma_{i-1}$ and the new edges $(u_l,v^i)$ and $(u_r,v^i)$, we 
apply Lemma~\ref{le:stretch} to suitably stretch the drawing. 
In particular, possible crossings can occur only with vertical edge segments of $P_{i-1}(u_l,u_r)$, 
because when walking along $P_{i-1}(u_l,u_r)$ from $u_l$ to $u_r$ we only encounter a (possibly empty) set of segments with slopes in the range 
$\{3\pi/4,\pi/2,0\}$, followed by a (possibly empty) set of segments with slopes in the range $\{\pi/2,\pi/4,0\}$.

\begin{figure}[t]
  \centering
  \begin{subfigure}[b]{.19\textwidth}
    \centering
    \includegraphics[width=\textwidth, page=3]{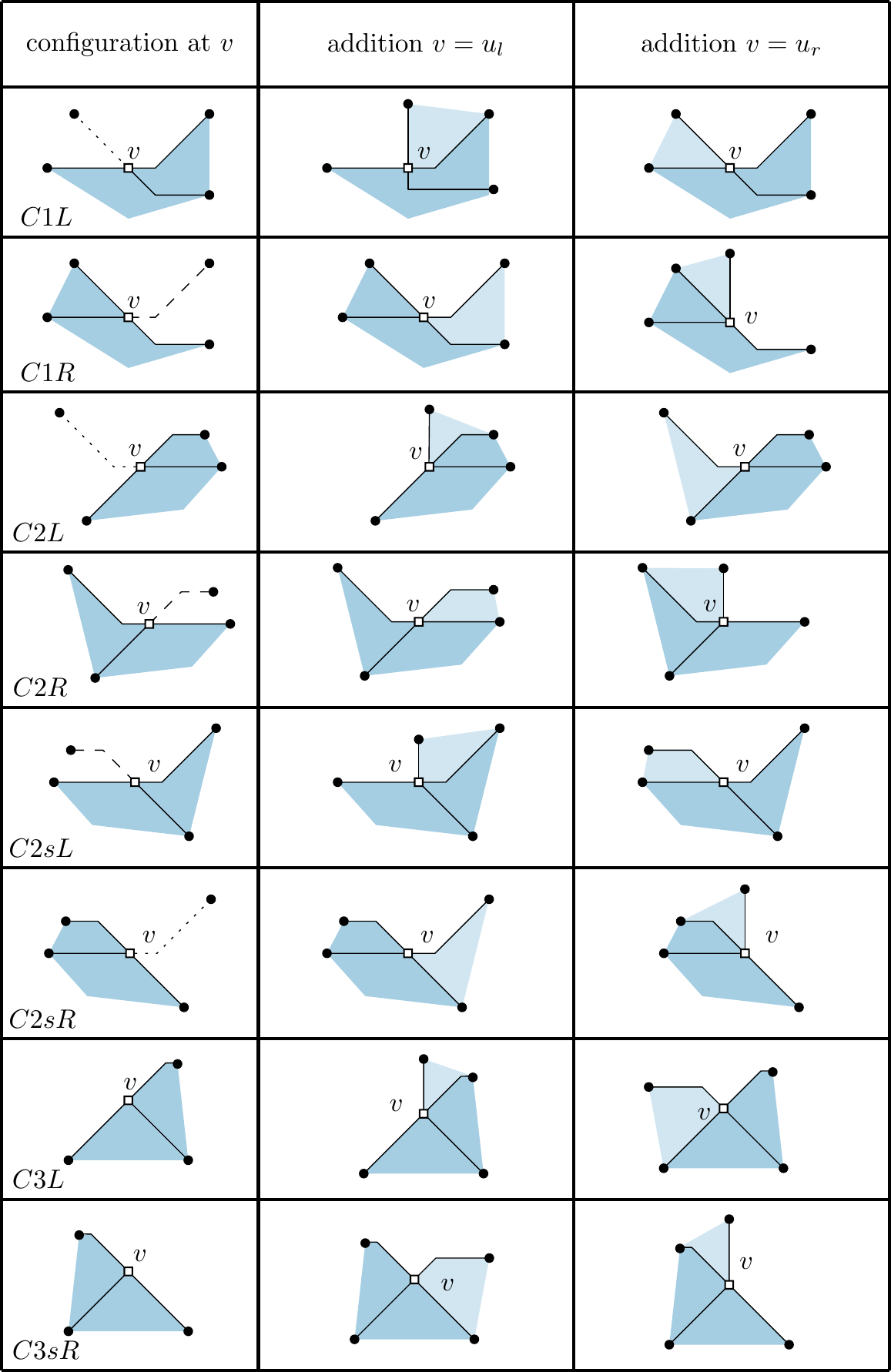}
    \caption{$C1$}
    \label{fig:case-a-sr-2}
  \end{subfigure}
  \hfill
  \begin{subfigure}[b]{.19\textwidth}
    \centering
    \includegraphics[width=\textwidth, page=4]{figures/2succ-addition}
    \caption{$C1$}
    \label{fig:case-b-sr-2}
  \end{subfigure}
  \begin{subfigure}[b]{.19\textwidth}
    \centering
    \includegraphics[width=\textwidth, page=5]{figures/2succ-addition}
    \caption{$C2$}
    \label{fig:case-c-sr-2}
  \end{subfigure}
  \hfill
  \begin{subfigure}[b]{.19\textwidth}
    \centering
    \includegraphics[width=\textwidth, page=6]{figures/2succ-addition}
    \caption{$C2$ symm.}
    \label{fig:case-dsym-sr-2}
  \end{subfigure}
  \hfill
  \begin{subfigure}[b]{.19\textwidth}
    \centering
    \includegraphics[width=\textwidth, page=7]{figures/2succ-addition}
    \caption{$C3$}
    \label{fig:case-d-sr-2}
  \end{subfigure}
  \caption{Some cases for the addition of a real singleton when $u_l$ is dummy with one successor not in $\Gamma_{i-1}$
  \label{fig:cases-real-sing-2}}
\end{figure}

\begin{figure}[b]
  \centering
  \begin{subfigure}[b]{.24\textwidth}
    \centering
    \includegraphics[width=\textwidth, page=1]{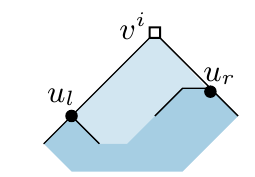}
    \caption{}
    \label{fig:case-a-sd}
  \end{subfigure}
  \hspace{2cm}
  \begin{subfigure}[b]{.24\textwidth}
    \centering
    \includegraphics[width=\textwidth, page=4]{figures/cases-sing-dummy}
    \caption{}
    \label{fig:case-d-sd}
  \end{subfigure}
  \caption{Illustration for the addition of a dummy singleton
  \label{fig:cases-dummy-sing}}
\end{figure}

\subcase{c:singleton-dummy} Vertex $v^i$ is dummy. By $1$-planarity, the two or 
three neighbors of~$v^i$ on $C_{i-1}$ are all real. If $v^i$ 
has two neighbors,  we  draw $(u_l,v^i)$ and $(u_r,v^i)$ as shown in 
Fig.~\ref{fig:case-a-sd}, while if $v^i$ has three neighbors,
 we draw $(u_l,v^i)$ and $(u_r,v^i)$ as shown in Fig.~\ref{fig:case-d-sd}. 
Analogous to the previous 
case, vertex $v^i$ is placed at the intersection of the lines passing 
through the assigned ports, which always intersect by construction, and  
avoiding crossings between  $\Gamma_{i-1}$ and the new edges $(u_l,v^i)$ and $
(u_r,v^i)$  by applying Lemma~\ref{le:stretch}. In particular, if $v^i$ has 
three neighbors on $C_{i-1}$, say $u_l$, $w$, and $u_r$, 
by \ref{P4} there is a horizontal segment between $u_l$ and $w$, 
as well as between $w$ and $u_r$. 
Thus, Lemma~\ref{le:stretch} can be applied not only 
to resolve crossings, but also to find a suitable point where the two lines 
with slopes~$\pi/4$ and~$3\pi/4$ meet along the line with slope 
$\pi/2$ that passes through $w$. 

\ccase{c:chain} $\V_i$ is a chain, i.e., $\V_i=\{v^i_1,v^i_2,\dots,v^i_l\}$. 
We find a point as if we had to place a vertex $v$ whose leftmost 
predecessor is the leftmost predecessor of~$v^i_1$ and whose rightmost 
predecessor is the rightmost predecessor of~$v^i_l$. We then draw the chain 
slightly below this point by using the same technique used to draw $\V_2$.
 Again, Lemma~\ref{le:stretch} can be applied to resolve 
possible crossings.

We formally prove the correctness of our algorithm in 
\ifproc
the full version~\cite{fullversion}.
\else
Appendix~\ref{app:3con1bend}.
\fi

\wormhole{gkvalid}
\newcommand{\gkvalidText}{%
Drawing $\Gamma_{K-1}$ is valid, stretchable, and attachable.}
\begin{lemma}\label{le:gk-1}
  \gkvalidText
\end{lemma}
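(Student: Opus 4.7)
The plan is to prove the lemma by induction on $i$, establishing the stronger statement that, for every $i$ with $2\le i\le K-1$, the drawing $\Gamma_i$ is valid, stretchable, and attachable. The base case $i=2$ is immediate: the three constructions illustrated in Fig.~\ref{fig:g2} use only the four admissible slopes (\ref{P1}), place a single bend at each dummy vertex (\ref{P2}), draw $(v_1,v_2)$ through a single bend at the lowest point of the drawing (\ref{P3}), contain the required horizontal segments and trivial cuts on $C_2$ (\ref{P4}), leave the N, NW, NE ports free at every real attachable vertex (\ref{P5}), and realise every new attachable dummy vertex in one of the configurations of Fig.~\ref{fig:cases} (\ref{P6}).

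For the inductive step, I assume $\Gamma_{i-1}$ is valid, stretchable, and attachable and treat the three additions (real singleton, dummy singleton, chain) separately. The easy properties are \ref{P1}, \ref{P3}, \ref{P5}, and \ref{P6}. Property \ref{P1} is immediate because every new segment is drawn along a port whose slope lies in $\{0,\pi/4,\pi/2,3\pi/4\}$. Property \ref{P3} is preserved because the segments $s_1,s_2$ incident to $v_1,v_2$ are never modified; Lemma~\ref{le:stretch}, which is our only tool for altering $\Gamma_{i-1}$, stretches horizontal segments and therefore leaves $p$ as the unique lowest point on the two extreme rays. Property \ref{P5} follows by case inspection: each newly added real vertex $v^i$ uses exactly its SW and SE ports (or SW, S, SE if degree $3$), leaving N, NW, NE free until $v^i$ is subsequently removed from $C_i$. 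Property \ref{P6} follows directly from the case list in Fig.~\ref{fig:cases-real-sing}, Fig.~\ref{fig:cases-real-sing-2}, and Fig.~\ref{fig:cases-dummy-sing}, which were designed so that each surviving attachable dummy vertex on $C_i$ either keeps its configuration from $\Gamma_{i-1}$ or is realised in a new configuration still belonging to $C1,C2,C3$ or its reflection.

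Property \ref{P2} requires more care but is still local: since $G$ is subcubic and each crossing in $G$ becomes a dummy vertex of degree four in $G^*$, a dummy vertex carries two fragments of each crossing edge; the port configurations in \ref{P6} are precisely the ones in which the two fragments belonging to the same edge of $G$ form a straight line or use two ports of the same slope, so that the union of the two fragments contributes at most one bend. One checks this directly at the moment a dummy vertex is completed, i.e.\ when its last successor is placed. No bend budget is exceeded in any of the cases of Sec.~3.

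The main obstacle is \ref{P4}, because the cycle $C_i$ contains new pairs of consecutive vertices whose connecting path $P_i(u,v)$ mixes old and new edges. For such a pair I decompose $P_i(u,v)$ into the old subpath inherited from $C_{i-1}$ and the freshly drawn edges incident to $\V_i$. On the old part, (a)--(c) are transferred from the inductive hypothesis after replacing, if needed, the attachability label of its right endpoint (the former predecessor of $\V_i$ now playing the role of $u$ or $v$); on the new part, an explicit inspection of the figures shows that every vertical segment pointing upwards is preceded by a horizontal segment, that at least one horizontal segment exists between any two real vertices, and that a $y$-monotone curve can be extended through the new horizontal segments, since all new edges use slopes from $\{0,\pi/4,\pi/2,3\pi/4\}$ and are placed above every horizontal segment of $\Gamma_{i-1}$. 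The new $uv$-cut is obtained from an old cut by replacing, at the point where the $y$-monotone curve used to exit, the horizontal segment of the removed edge by the horizontal segment of the new edge that now lies in the same vertical strip. Applying this at $i=K-1$ yields the desired conclusion.
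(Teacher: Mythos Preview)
Your inductive framework and the handling of \ref{P1}, \ref{P5}, \ref{P6} match the paper, but your argument for \ref{P4} has a genuine gap. Take a new consecutive pair $(w,v^i)$ in $\Gamma_i$, where $w$ is the attachable vertex just left of $u_l$ on $C_{i-1}$, so $P_i(w,v^i)=P_{i-1}(w,u_l)\cup\{(u_l,v^i)\}$. Your decomposition into an ``old part'' (handled by induction) and a ``new part'' (handled by figure inspection) does not cover their interaction. The inductive \ref{P4} for the pair $(w,u_l)$ does not guarantee that $P_{i-1}(w,u_l)$ contains any horizontal segment: \ref{P4}(b) yields one only when both endpoints are real, and \ref{P4}(a) only says that upward verticals already present there are preceded by horizontals. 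When $u_l$ is dummy and the new edge $(u_l,v^i)$ is drawn without a horizontal segment of its own---which happens in several of the one-successor configurations where $u_l$ already has an L-successor on $C_{i-1}$---nothing in your sketch produces the horizontal segment required by \ref{P4}(a) for the combined path, nor does it establish \ref{P4}(b) when $w$ is real. The paper fills this gap with a structural walk: starting at $v^i$ and descending along $C_{i-1}$ toward $w$ (each step going from a vertex to the predecessor on its left, which is possible because a real vertex has at most one successor), one either meets a chain edge (horizontal), meets a dummy vertex whose port configuration forces a horizontal on $C_i$, or reaches $w$ itself---in which case the last edge shows $w$ has an L-successor on $C_i$, so $w$ is not L-attachable and \ref{P4}(a) is vacuous, while the purely descending walk also makes \ref{P4}(a) from the $v^i$ side vacuous. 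This argument is the substantive content of the lemma and is not supplied by ``explicit inspection of the figures''.

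Two smaller corrections. Your \ref{P3} justification asserts that $s_1,s_2$ are never modified, but Lemma~\ref{le:stretch} explicitly redraws $(v_1,v_2)$ after each stretch; the conclusion survives only because the redrawn edge again satisfies \ref{P3}. And for \ref{P4}(c), the $uv$-cut for a new horizontal edge inside a chain is obtained by \emph{adding} that edge to an existing cut through $P_{i-1}(u_l,u_r)$ (the $y$-monotone curve is extended upward through the common face), not by replacing an old edge as you describe.
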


\begin{figure}[t]
  \centering
  \begin{subfigure}[b]{.32\textwidth}
    \centering
    \includegraphics[page=2]{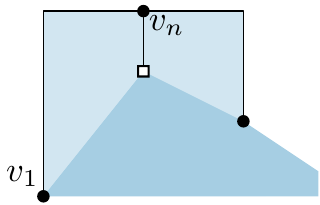}
    \caption{$v_n$ is dummy}
    \label{fig:vn-dummy}
  \end{subfigure}
  \hfil
  \begin{subfigure}[b]{.32\textwidth}
    \centering
    \includegraphics[page=1]{figures/vn}
    \caption{$v_n$ is real}
    \label{fig:vn-real}
  \end{subfigure}
  \caption{Illustration for the addition of $\V_k$
  \label{fig:vn}}
\end{figure}

\paragraph{Construction of $\Gamma_K$.} 
We now show how to add $\V_K=\{v_n\}$ to $\Gamma_{K-1}$ so as to obtain a valid 
drawing of $G_K$, and hence the desired drawing of $G$ after replacing dummy 
vertices with crossing points. Recall that $(v_1,v_n)$ is an edge of $G$ by the
definition of canonical ordering. We distinguish whether $v_n$ is real or 
dummy; the two cases are shown in Fig.~\ref{fig:vn}. Note that if $v_n$ is 
dummy, its four neighbors are all real and hence their N, 
NW, and NE ports are free by \ref{P5}. If $v_n$ is real, it has three neighbors in 
$\Gamma_{K-1}$, $v_1$ is real by construction, and the S port can be
used to attach with a dummy vertex. Finally, since $\Gamma_{K-1}$ is 
attachable, we can use Lemma~\ref{le:stretch} to avoid crossings and to find 
a suitable point to place $v_n$. A complete 
drawing is shown in Fig.~\ref{fig:intro}.

The theorem follows immediately by the choice of the slopes.

\begin{theorem}\label{thm:3con1bend}
	Every $3$-connected cubic $1$-planar graph admits a 
	$1$-bend $1$-planar drawing with at most~4 distinct slopes 
	and angular and crossing resolution~$\pi/4$.
\end{theorem}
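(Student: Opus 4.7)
The plan is to assemble the machinery developed above into a finished drawing, then read off the four bounds in the statement. First, I would apply Lemma~\ref{le:crossing-minimal} to re-embed $G$ so that its planarization $G^*$ is $3$-connected. Since $G$ is cubic on $n$ vertices, $G^*$ has fewer than $3n/4$ dummy vertices, so an averaging argument over the faces of $G^*$ produces a face with two consecutive real vertices; I would pick this face as outer face, fix a real edge $(v_1,v_2)$ on it, and compute a canonical ordering $\delta=\{\V_1,\dots,\V_K\}$ of $G^*$ with $\V_1=\{v_1,v_2\}$.

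Next I would execute the incremental construction of Section~\ref{sec:3con1bend}: build $\Gamma_2$ as in Fig.~\ref{fig:g2}, and then extend it through the singleton and chain cases of the inductive step. By Lemma~\ref{le:gk-1}, the resulting $\Gamma_{K-1}$ is valid, stretchable, and attachable. To produce $\Gamma_K$ I would add $\V_K=\{v_n\}$ as in Fig.~\ref{fig:vn}: attachability (\ref{P5}) guarantees that the N, NW, and NE ports are free at every real attachable vertex on $C_{K-1}$, while \ref{P6} together with the fact that a dummy $v_n$ has only real predecessors (by $1$-planarity) covers the dummy sub-cases. Any conflicts between the new edges and the interior of $\Gamma_{K-1}$ would be removed by invoking Lemma~\ref{le:stretch} to lengthen appropriate horizontal segments along $uv$-cuts. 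Finally, replacing every dummy vertex of $\Gamma_K$ by a crossing point yields the desired $1$-planar drawing of $G$.

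It then remains to verify the four claims. By~\ref{P1} all edge segments use only slopes in $\{0,\pi/4,\pi/2,3\pi/4\}$, giving at most $4$ distinct slopes. By~\ref{P2} the two fragments of each crossed edge share at most one bend in total, so after merging them at the crossing point every edge of $G$ is a polyline with at most one bend. The four slopes differ pairwise by positive multiples of $\pi/4$, so two segments meeting at a common endpoint (vertex resolution) or at a crossing point (crossing resolution) form an angle of at least $\pi/4$, as required.

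The main obstacle has been absorbed into Lemma~\ref{le:gk-1}: the delicate bookkeeping needed to maintain \ref{P3}--\ref{P6} through every real/dummy singleton and chain case of the canonical-ordering induction, using Lemma~\ref{le:stretch} to reconcile new edges with the drawing already built, is where the real work sits. Once those invariants survive to $\Gamma_{K-1}$, gluing $v_n$ and deriving the slope, bend, and resolution bounds is a short matter of case analysis plus direct inspection of the slope set.
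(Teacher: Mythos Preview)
Your proposal is correct and follows essentially the same approach as the paper: re-embed via Lemma~\ref{le:crossing-minimal}, choose an outer face with a real edge by the counting argument, run the canonical-ordering construction through Lemma~\ref{le:gk-1}, add $v_n$ using \ref{P5}/\ref{P6} and Lemma~\ref{le:stretch}, and then read off the slope, bend, and resolution bounds directly from \ref{P1}, \ref{P2}, and the slope set $\{0,\pi/4,\pi/2,3\pi/4\}$. The paper's own statement of the theorem's proof is just the one-line observation that it ``follows immediately by the choice of the slopes,'' with all the work having been absorbed into the preceding construction exactly as you describe.
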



\section{2-bend drawings}\label{sec:2bends}

Liu et al.~\cite{lms-la2be-DAM98} presented an algorithm to compute orthogonal
drawings for planar graphs of maximum degree~4 with at most~2 bends per edge
(except the octahedron, which requires~3 bends on one edge). 
We make use of their algorithm for biconnected graphs.
The algorithm chooses two vertices~$s$ and~$t$ and computes an
$st$-ordering of the input graph. Let $V=\{v_1,\ldots,v_n\}$ with $\sigma(v_i)=i$,
$1\le i\le n$. Liu et al. now compute an embedding of~$G$ such that~$v_2$ lies on
the outer face if $\deg(s)=4$ and~$v_{n-1}$ lies on the outer face if $\deg(t)=4$;
such an embedding exists for every graph with maximum degree~4 except the octahedron.

The edges around each vertex~$v_i,1\le i\le n$, are assigned to the four ports as follows.
If~$v_i$ has only one outgoing edge, it uses the N port; if~$v_i$ has two
outgoing edges, they use the N and E port; if~$v_i$ has three
outgoing edges, they use the N, E, and W port; and if~$v_i$
has four outgoing edges, they use all four ports. Symmetrically,
the incoming edges of~$v_i$ use the S, W, E, and N port, in this order.
The edge~$(s,t)$ (if it exists) is assigned to the W port of both~$s$
and~$t$. If~$\deg(s)=4$, the edge $(s,v_2)$ is assigned to the S port of~$s$
(otherwise the port remains free);
if~$\deg(t)=4$, the edge $(t,v_{n-1})$ is assigned to the N port of~$t$
(otherwise the port remains free).
Note that every vertex except~$s$ and~$t$ has at least one incoming and one
outgoing edge; hence, the given embedding of the graph provides a unique assignment 
of edges to ports. 
Finally, they place the vertices bottom-up as prescribed
by the $st$-ordering.
The way an edge is drawn is determined completely by the port assignment,
as depicted in Fig.~\ref{fig:2bend-shapes}.

Let $G=(V,E)$ be a subcubic 1-plane graph. We first re-embed~$G$ according to
Lemma~\ref{le:crossing-minimal}. Let~$G^*$ be the planarization of~$G$
after the re-embedding. Then, all cutvertices of~$G^*$ are real vertices,
and since they have maximum degree~3, there is always a bridge connecting
two 2-connected components.
Let $G_1,\ldots,G_k$ be the 2-connected components of~$G$, 
and let~$G_i^*$ be the planarization of $G_i,1\le i\le k$.
We define the \emph{bridge decomposition tree} $\mathcal{T}$ of $G$ as the graph 
having a node for each component $G_i$ of $G$, and an edge 
$(G_i,G_j)$, for every pair $G_i, G_j$ connected by a bridge in $G$. 
We root~$\mathcal{T}$ in~$G_1$.
For each component $G_i,2\le i\le k$,
let~$u_i$ be the vertex of~$G_i$ connected to the parent of~$G_i$ in~$\mathcal T$
by a bridge and let~$u_1$ be an arbitrary vertex of~$G_1$. 
We will create a drawing~$\Gamma_i$ for each component~$G_i$ with
at most 2 slopes and 2 bends such that~$u_i$ lies on the outer face.

To this end, we first create a drawing~$\Gamma_i^*$ of~$G_i^*$ with the algorithm of Liu et 
al.~\cite{lms-la2be-DAM98} and then modify the drawing. Throughout the
modifications, we will make sure that the following invariants hold for the
drawing~$\Gamma_i^*$.
\begin{enumerate}[label=(I\arabic*)]
	\item $\Gamma_i^*$ is a planar orthogonal drawing of $G_i^*$ and edges
  are drawn as in Fig.~\ref{fig:2bend-shapes};
  \item $u_i$ lies on the outer face of $\Gamma_i^*$ and its N port is free;
  \item every edge is $y$-monotone from its source to its target;
  \item every edge with 2 bends is a C-shape, there are no edges with more bends;
  \item if a C-shape ends in a dummy vertex, it uses only E ports; and
  \item if a C-shape starts in a dummy vertex, it uses only W ports.
\end{enumerate}

\begin{figure}[t]
  \centering
  \includegraphics{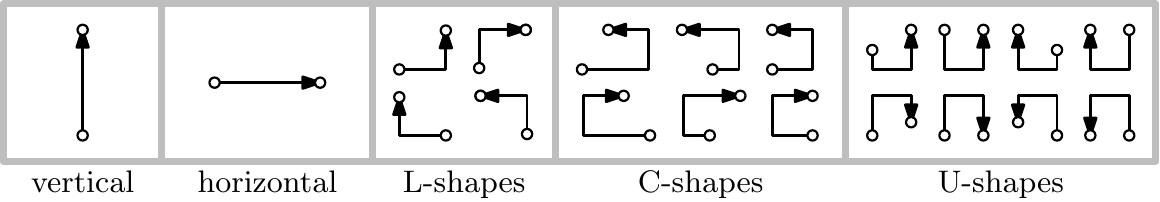}
  \caption{The shapes to draw edges}
  \label{fig:2bend-shapes}
\end{figure}

\wormhole{bendInvariants}
\newcommand{\bendInvariantsText}{%
  Every $G_i^*$ admits a drawing~$\Gamma_i^*$ that satisfies
  invariants (I1)--(I6).}
\begin{lemma}\label{lem:2bend-invariants}
  \bendInvariantsText
\end{lemma}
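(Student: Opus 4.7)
The plan is to invoke Liu et al.'s algorithm on $G_i^*$ with a carefully chosen $st$-orientation and then perform local modifications, where necessary, to enforce (I5) and (I6).

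First, I would check that Liu et al.'s algorithm applies. $G_i^*$ is biconnected because $G_i$ is and, by Lemma~\ref{le:crossing-minimal}, no dummy vertex of $G_i^*$ is a cutvertex. $G_i^*$ is not the octahedron: only dummy vertices can have degree $4$ (real vertices have degree at most $3$ by subcubicity), but every dummy is adjacent to four real vertices, so $G_i^*$ has at least one vertex of degree at most~$3$. I would then choose the $st$-orientation so that $t=u_i$. Since $u_i$ is real and has degree at most~$3$, Liu et al.'s port convention does not use $u_i$'s N port (that port is used only when $\deg(t)=4$); together with the fact that $t$ lies on the outer face, this establishes (I2). Invariant (I1) is immediate from the fact that the output is a planar orthogonal drawing with at most two bends per edge. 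Invariant (I3) follows from the bottom-up placement in $st$-order combined with Liu et al.'s port assignment (outgoing edges via N/E/W, incoming via S/W/E), which forces every edge to be $y$-monotone from its source to its target.

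For (I4), I would observe that a $y$-monotone $2$-bend orthogonal edge is either a C-shape (two vertical segments joined by a horizontal segment, endpoints on the same horizontal side) or an $H$-$V$-$H$ step shape (endpoints on opposite horizontal sides). Because the port convention pairs ports on the same side (N with S, E with W), each $2$-bend edge produced by Liu et al.\ matches one of the shapes in Figure~\ref{fig:2bend-shapes}; in particular, every $2$-bend edge is a C-shape and no edge has more than two bends.

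The main obstacle is establishing (I5) and (I6). At each dummy vertex $d$ there are exactly two incoming and two outgoing fragments, one per original crossing edge, and by the 1-planar embedding the pairs of opposite ports $(\mathrm{N},\mathrm{S})$ and $(\mathrm{E},\mathrm{W})$ carry the two original edges straight through the crossing. Liu et al.'s default assignment places the incoming fragments at S and W and the outgoing fragments at N and E. If a fragment incident to~$d$ happens to be routed as a C-shape, its port at~$d$ may be on the wrong side according to (I5) or (I6). I would resolve this by a local reflection at each offending dummy vertex: swap the E and W ports of the horizontally-routed crossing edge so that the incoming fragment enters~$d$ from the east and the outgoing one leaves to the west, absorbing the resulting geometric shift by horizontally translating a subpart of the drawing. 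The hardest part is to ensure that these local reflections compose globally without disturbing invariants (I1)--(I4); I would handle this inductively along the $st$-order, processing one dummy vertex at a time and arguing that each reflection stays within a rectangular slab bounded by previously placed vertices, so planarity, $y$-monotonicity, and the C-shape property are all preserved.
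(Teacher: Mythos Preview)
Your proposal has a genuine gap and an unnecessary detour, both stemming from the same oversight: you only fix $t=u_i$ to be real, but you never require $s$ to be real. If $s$ happens to be a dummy vertex, then $\deg(s)=4$ and Liu et al.'s convention forces the edge $(s,v_2)$ out of the S port of~$s$, producing a U-shape; this immediately breaks your argument for (I3), and hence also your justification of (I4). The paper avoids this by observing that every face of $G_i^*$ contains at least two real vertices (no two dummies are adjacent), so one can pick a real $s$ on the outer face together with $t=u_i$.

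Once both $s$ and $t$ are real, your entire programme of ``local reflections'' for (I5)--(I6) is unnecessary: these invariants hold \emph{automatically}. The key point you are missing is that every neighbour of a dummy vertex is real and therefore has degree at most~$3$; by Liu et al.'s assignment such a vertex never uses its W port for an outgoing edge nor its E port for an incoming edge. A short case analysis on the W- and E-port edges of a dummy vertex $v$ then shows that any C-shape incident to $v$ must use both E ports (if it ends at~$v$) or both W ports (if it starts at~$v$). Your premise that ``at each dummy vertex there are exactly two incoming and two outgoing fragments, one per original crossing edge'' is also false in general---an $st$-ordering can give a dummy vertex a $1$/$3$ or $3$/$1$ in/out split---so the reflection argument you sketch is built on an incorrect structural claim and would need substantial repair even if one wanted to pursue it.
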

\begin{sketch}
  We choose~$t=u_i$ and some real vertex~$s$ and use the algorithm by Liu et al.\
  to draw~$G_i^°$. Since~$s$ and~$t$ are real, there are no U-shapes.
  Since no real vertex can have an outgoing edge at its W port or incoming edge
  at its E port, the invariants follow.
  The full proof is given in 
  \ifproc
  the full version~\cite{fullversion}.
  \else
  Appendix~\ref{app:2bends}.
  \fi
\end{sketch}

We now iteratively remove the C-shapes from the drawing while
maintaining the invariants.
We make use of a technique similar to the stretching in Section~\ref{sec:3con1bend}.
We lay an orthogonal $y$-monotone curve~$S$ through our drawing that intersects no
vertices. Then we stretch the drawing by moving~$S$ and all features that lie right of~$S$ 
to the right, and stretching all points on~$S$ to horizontal
segments. After this stretch, in the area between the old and the
new position of~$S$, there are only horizontal segments of edges
that are intersected by~$S$. The same operation can be defined symmetrically 
for an $x$-monotone curve that is moved upwards.

\wormhole{bendBicon}
\newcommand{\bendBiconText}{%
  Every $G_i$ admits an orthogonal 2-bend drawing such that~$u_i$ lies on the 
  outer face and its N port is free.}
\begin{lemma}\label{lem:2bend-bicon}
  \bendBiconText
\end{lemma}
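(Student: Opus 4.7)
The plan is to take the drawing $\Gamma_i^*$ of $G_i^*$ produced by Lemma~\ref{lem:2bend-invariants} and iteratively eliminate every C-shape, while preserving the six invariants. At each dummy vertex~$d$ the two fragments of the same original edge of~$G_i$ exit through opposite ports and are therefore collinear at~$d$, so collapsing~$d$ into a crossing point introduces no new bend. Hence once every fragment has at most one bend, every edge of~$G_i$ has at most two bends, which is what the statement asks. The remaining part of the conclusion ($u_i$ on the outer face with its N port free) is just invariant~(I2) and is preserved throughout.

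For the main step I would pick any C-shape~$e=(u,v)$ still present in the current drawing and replace it by a shape with at most one bend. By invariants (I5) and (I6), $e$ cannot have two dummy endpoints, because its ports would then have to be simultaneously all~E and all~W; hence at least one endpoint of~$e$---say~$u$---is real. Since real vertices have degree at most three, at least one of $u$'s four ports is free, which gives the flexibility to reroute~$e$. To create the physical space for the new route without generating crossings, I would employ the stretching operation described just before the statement: lay an orthogonal $y$- or $x$-monotone curve~$S$ through~$\Gamma_i^*$ avoiding all vertices, shift everything on one side of~$S$ outwards, and open a row or column of empty space through which the new shape of~$e$ is routed. Each such operation strictly decreases the total number of bends of the drawing, so the process terminates.

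The hard part will be carrying out the case analysis for the rerouting so that invariants (I5) and (I6) are never violated, i.e.\ that no new forbidden C-shape at a dummy is created. This must be checked separately depending on whether the second endpoint~$v$ is real (more flexibility) or dummy (ports fixed by (I5) or (I6)), and depending on the opening direction of~$e$. The most delicate sub-case is when~$v$ is a dummy where (I5)--(I6) force both ports of~$e$ to be~E (or both~W), since then a one-bend shape with the current port assignment is impossible; here the plan is to first swap $e$'s port at the real endpoint~$u$ to a free port and then route~$e$ as an L-shape through the stretched strip, verifying that the resulting drawing still satisfies (I1)--(I6). Once this rerouting lemma is established, iterating it removes every C-shape and produces the desired drawing of~$G_i$.
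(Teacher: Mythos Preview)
Your plan is essentially the paper's own approach: start from the drawing of Lemma~\ref{lem:2bend-invariants}, repeatedly pick a C-shape with a dummy endpoint, exploit the free port at its real endpoint together with a stretching operation to turn it into an L-shape, and then collapse the dummies. Two small points are worth tightening. First, you do not need to eliminate \emph{every} C-shape: a C-shape joining two real vertices already has only two bends and survives unchanged into~$\Gamma_i$, so the only C-shapes that matter are those with a dummy endpoint, and the sub-case ``$v$ real'' in your case analysis is vacuous. Second, your termination measure ``total number of bends strictly decreases'' is fragile: in the actual case analysis (e.g.\ when the N port at~$u$ is occupied by a vertical edge~$(u,w)$), freeing that port turns $(u,w)$ from a straight segment into an L-shape, so the bend count is unchanged overall. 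The robust invariant, which the paper uses, is that the number of C-shapes decreases by one at each step.
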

\begin{sketch}
  We start with a drawing~$\Gamma_i^*$ of~$G_i^*$ that satisfies invariants
  (I1)--(I6), which exists by Lemma~\ref{lem:2bend-invariants}.
  By (I2),~$u_i$ lies on the outer face and its N port is free.
  If no dummy vertex in~$\Gamma_i^*$ is incident to a C-shape,  
  by (I4) all edges incident to dummy vertices are drawn with at most~1 bend,
  so the resulting drawing~$\Gamma_i$ of~$G_i$ is an orthogonal 2-bend drawing.
  Otherwise, there is a C-shape between a real vertex~$u$ and
  a dummy vertex~$v$. We show how to eliminate this C-shape without 
  introducing new ones while maintaining all invariants. 
    
  We prove the case that $(u,v)$ is directed from~$u$ to~$v$, so by (I5) it uses
  only E ports; the other case is symmetric.
  We do a case analysis based on which ports at~$u$ are free.
  We show one case here and the rest in 
  \ifproc
  the full version~\cite{fullversion}.
  \else
  Appendix~\ref{app:2bends}.
  \fi
  
  \setcounter{casecounter}{0}
  \ccase{c:bottomleft} The N port at~$u$ is free;
  see Fig.~\ref{fig:2bend-bottomleft}. Create a curve~$S$ as follows: 
  Start at some point~$p$ slightly to the top left of~$u$ and extend it downward 
  to infinity. Extend it from~$p$ to the right until it passes the
  vertical segment of $(u,v)$ and extend it upwards to infinity. Place the
  curve close enough to~$u$ and $(u,v)$ such that no vertex or bend point
  lies between~$S$ and the edges of~$u$ that lie right next to it.
  Then, stretch the drawing by moving~$S$ to the right such that~$u$ is
  placed below the top-right bend point of $(u,v)$. Since~$S$ intersected a
  vertical segment of~$(u,v)$, this changes the edge to be drawn with~4 bends.
  However, now the region between~$u$ and the second bend point of $(u,v)$
  is empty and the N port of~$u$ is free, so we can make an L-shape out
  of $(u,v)$ that uses the N port at~$u$. This does not change the drawing
  style of any edge other than $(u,v)$, so all the invariants are maintained
  and the number of C-shapes is reduced by one.
  \begin{figure}[t]
    \centering
    \subcaptionbox{}{\includegraphics[page=1,scale=0.9]{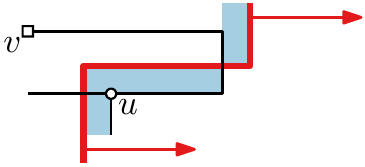}}
    \hfill
    \subcaptionbox{}{\includegraphics[page=2,scale=0.9]{2bend-removeCTiny}}
    \hfill
    \subcaptionbox{}{\includegraphics[page=3,scale=0.9]{2bend-removeCTiny}}
    \caption{Proof of Lemma~\ref{lem:2bend-bicon}, Case~\ref{c:bottomleft}}
    \label{fig:2bend-bottomleft}
  \end{figure}
\end{sketch}

Finally, we combine the drawings~$\Gamma_i$ to a drawing~$\Gamma$ of~$G$.
Recall that every cutvertex is real and two biconnected
components are connected by a bridge. Let~$G_j$ be a child of~$G_i$
in the bridge decomposition tree. We have drawn~$G_j$ with~$u_j$ on the outer
face and a free N port. Let~$v_i$ be the neighbor of~$u_j$ in~$G_i$.
We choose one of its free ports, rotate and scale~$\Gamma_j$ such that it fits
into the face of that port, and connect~$u_j$ and~$v_i$ with a vertical
or horizontal segment. Doing this for every biconnected component
gives an orthogonal 2-bend drawing of~$G$.

\begin{theorem}\label{thm:2bends}
	Every subcubic 1-plane graph admits a 2-bend 1-planar 
	drawing with at most~2 distinct slopes and both angular and crossing 
	resolution~$\pi/2$.
\end{theorem}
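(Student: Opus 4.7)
The plan is to assemble the drawing from the per-component drawings produced by Lemma~\ref{lem:2bend-bicon} along the bridge decomposition tree $\mathcal T$, and then read off the four claimed properties (2 bends, 2 slopes, vertex resolution $\pi/2$, crossing resolution $\pi/2$, and 1-planarity) from the construction.

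First, I would re-embed $G$ once and for all using Lemma~\ref{le:crossing-minimal}, so that no cutvertex of the planarization is a dummy vertex. In particular, every bridge of $G$ (an edge whose removal disconnects $G$) is a real edge both of whose endpoints are real vertices, and therefore every edge of $\mathcal T$ corresponds to a bridge joining two real vertices of distinct biconnected components. I would then process $\mathcal T$ in a top-down fashion starting from the root $G_1$, maintaining the invariant that after processing a subtree rooted at $G_i$, the combined drawing of that subtree is an orthogonal 2-bend 1-planar drawing in which $u_i$ lies on the outer face with its N port free (the latter is exactly what Lemma~\ref{lem:2bend-bicon} provides for the initial drawing of $G_i$).

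The combination step proceeds as follows. For each child $G_j$ of $G_i$ in $\mathcal T$, let $v_i \in G_i$ be the neighbor of $u_j$ along the bridge $(v_i,u_j)$. Since $G$ is subcubic and $v_i$ is real, at least one of $v_i$'s four orthogonal ports is not used by an edge of $G_i$ nor by a previously attached child subtree, and moreover there is a face of the current drawing incident to $v_i$ on the side of that port. I would rotate $\Gamma_j$ by a multiple of $\pi/2$ so that the free N port of $u_j$ points from $u_j$ towards $v_i$, scale $\Gamma_j$ small enough to fit inside a small disk contained in that face, and connect $u_j$ to $v_i$ by a single horizontal or vertical segment of length chosen to avoid the rest of $\Gamma_i$. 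The rotation preserves invariants (I1)--(I6) used in the proof of Lemma~\ref{lem:2bend-bicon}, and the scaling together with the bounded size of $\Gamma_j$ avoids interference with anything else already drawn. After attaching all children, the outer-face and free-N-port invariant for $G_i$ carries the induction upward.

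Finally, I would verify the four claimed quantitative properties. By construction every edge segment is horizontal or vertical, so the drawing uses exactly the two slopes $\{0,\pi/2\}$; every bend is a right angle and bridges contribute $0$ bends while Lemma~\ref{lem:2bend-bicon} gives at most $2$ bends per internal edge. Each pair of segments meeting at a real vertex uses distinct ports among $\{\mathrm N,\mathrm E,\mathrm S,\mathrm W\}$, so the vertex resolution is $\pi/2$. Every crossing of $G$ corresponds to a dummy vertex of some $G_i^*$, where by (I5) and (I6) the four incident fragments use the four orthogonal ports, producing a right-angle crossing; together with the fact that each $\Gamma_i$ is planar and the attachments are done inside empty faces, this also shows that the whole drawing is 1-planar. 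The main obstacle I anticipate is the geometric bookkeeping in the attachment step---showing that a suitable free port of $v_i$ together with an incident face of adequate shape always exists so that the rotated, scaled $\Gamma_j$ can be inserted without crossing any existing edge; this reduces to a routine case distinction on the port configuration of $v_i$ in~$\Gamma_i$, using that $v_i$ has degree at most~$3$ in~$G$ and that its incident faces in the orthogonal drawing can be enlarged by the stretching technique already introduced in Section~\ref{sec:3con1bend} and reused in the proof of Lemma~\ref{lem:2bend-bicon}.
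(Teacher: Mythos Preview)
Your proposal is correct and follows essentially the same approach as the paper: after re-embedding via Lemma~\ref{le:crossing-minimal}, draw each biconnected component with Lemma~\ref{lem:2bend-bicon}, then combine along the bridge decomposition tree by rotating and scaling each child drawing into a face at a free port of the attachment vertex and connecting with a single axis-aligned segment. Your explicit verification of the four properties and the port-counting argument at $v_i$ are a bit more detailed than the paper's one-paragraph combination step, but the strategy is identical; the only minor slip is attributing the right-angle crossing property to invariants (I5)--(I6), when in fact it follows simply from each dummy vertex having degree~4 in an orthogonal drawing.
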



\section{Lower bounds for 1-plane graphs}\label{sec:lower}


\subsection{1-bend drawings of subcubic graphs}\label{subsec:lower-1bend}

\begin{theorem}\label{thm:lower-1bend}
	There exists a subcubic 3-connected 1-plane graph such that any 
	embedding-preserving 1-bend drawing uses at least~3 distinct slopes. 
	The lower bound holds even if we are allowed to change the outer face.
\end{theorem}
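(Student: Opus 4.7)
The plan is to construct a concrete 3-connected subcubic 1-plane graph $G$ for which a 1-bend drawing using only two slopes can be ruled out by a short case analysis, uniformly over the choice of outer face. The key rigidity observation is: in any drawing that uses only two slopes $\sigma_1,\sigma_2$, each vertex has only four ports (the two directions of each slope); since $G$ is cubic, every vertex uses three of them, so two of the three incident edges leave along opposite directions and pass \emph{straight through} the vertex, while the remaining incident edge is the \emph{turn} edge. Analogously, at every crossing the four ports of the dummy vertex in the planarization $G^*$ are all used, so the two crossing edges cross straight-through at angle $\alpha := \angle(\sigma_1,\sigma_2)$.

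Building on this rigidity, I would argue combinatorially. At every real vertex $v$, the two \emph{through} edges must be in opposite positions of the rotation system, leaving exactly one admissible choice of turn edge per vertex once the embedding is fixed; the turn-edge assignment is therefore essentially rigid. For the chosen $G$, this leaves only finitely many admissible global assignments. I would rule each out by summing angles around a carefully chosen face of $G^*$: each corner of the face boundary contributes a value from the small discrete set $\{\alpha,\pi-\alpha,\pi+\alpha,2\pi-\alpha\}$ determined by which pair of ports bounds the face at that corner, while the total turning around the face must equal $\pm 2\pi$. For the chosen $G$, no admissible turn-edge assignment satisfies this equation at some face, yielding a contradiction. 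The same argument is then applied to every face in turn to obtain the outer-face-independent bound.

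The main obstacle is selecting $G$ so that the contradiction survives every choice of outer face at once. I would achieve this either by exploiting symmetry, picking a graph so uniform that all faces of its 1-planar embedding are structurally equivalent, or by enumerating the (few) faces of a small explicit example and running the angle-counting argument for each. A secondary subtlety is the treatment of bends, since each edge may be bent anywhere along its length: the angle-sum argument must account for bend points lying on a face boundary in addition to corners at real and dummy vertices, and the rigidity analysis must rule out the option that a turn edge is replaced by a through edge with a bend that hides the turn.
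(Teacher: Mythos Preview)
Your plan has a concrete error and an unfilled hole. The error is the claim that the embedding determines ``exactly one admissible choice of turn edge per vertex.'' At a cubic vertex the rotation system is a cyclic order of three edges; there is no notion of ``opposite positions'' in a $3$-cycle. Mapping three cyclically ordered edges into four cyclically ordered ports admits twelve order-preserving injections, and for each vertex any of the three incident edges can end up as the turn edge. So the space of port assignments is much larger than you assume, and the downstream face-by-face angle count would have exponentially many cases rather than ``finitely many admissible global assignments'' in any useful sense. The hole is that you never name the graph: the entire argument hinges on the combinatorics of a specific $G$, and the proposal gives no candidate, so nothing can be checked. You also correctly flag, but do not resolve, the freedom introduced by bends on face boundaries---each such bend adds a corner whose angle you do not control, which further weakens the angle-sum constraint.

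The paper's proof is dramatically simpler and avoids all of this. Take $G=K_4$ with its planar embedding; this is cubic, $3$-connected, and trivially $1$-plane. Whatever the outer face, it is a $3$-cycle. Any simple closed polygon drawn with two slopes has at least four nonreflex corners (after an affine change of coordinates it is an orthogonal polygon), but the three boundary edges contribute at most three bends, so at least one nonreflex corner must be a vertex~$v$ of the triangle. At such a convex corner the two used ports are adjacent, so the two free ports of~$v$ both point into the exterior of the polygon; hence the edge from~$v$ to the interior vertex cannot be drawn with one bend without crossing the boundary, contradicting the (crossing-free) embedding. Every face of $K_4$ is a triangle, so the argument is independent of the outer face. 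This is a four-line proof; the angle-sum machinery is unnecessary.
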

\begin{proof}
Let $G$ be the $K_4$ with a planar embedding. The outer face is a 3-cycle, which
has to be drawn as a polygon~$\Pi$ with at least four (nonreflex) corners. 
Since we allow only one bend per edge, one of the corners
of~$\Pi$ has to be a vertex of~$G$. The vertex in the interior has to connect to this corner, however,
all of its free ports lie on the outside. Thus, no drawing of~$G$ is possible. 
\end{proof}
\todo{the idea of making the example larger does not work .... need to think about it.}


\subsection{Straight-line drawings}\label{subsec:lower-straight}

The full proofs for this section are given in 
\ifproc
the full version~\cite{fullversion}.
\else
Appendix~\ref{app:lower}.
\fi

\wormhole{lowerStraight}
\newcommand{\lowerStraightTwoRegText}{%
	There exist 2-regular 2-connected 1-plane graphs with $n$ vertices
	such that any embedding-preserving straight-line drawing 
	uses~$\Omega(n)$ distinct slopes.}
\begin{theorem}\label{thm:lower-straight-2reg}
  \lowerStraightTwoRegText
\end{theorem}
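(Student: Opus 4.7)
The plan is to construct a family $\{G_n\}$ of $n$-vertex embedded cycles (2-regular 2-connected 1-plane graphs) for which every embedding-preserving straight-line drawing uses $\Omega(n)$ distinct slopes.

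First I would design $G_n$ as the cyclic concatenation of $k=\Theta(n)$ copies of a constant-size \emph{spike gadget} $H$. Each copy $H_i$ contributes a bounded number of consecutive cycle vertices, including one distinguished \emph{tip} $t_i$, and exactly one crossing linking $H_i$ to its neighbour $H_{i+1}$. The gadgets are arranged around a designated central face $f$ of the planarization $G^*_n$ so that the tips $t_1,\dots,t_k$ lie on the boundary of $f$ in this cyclic order. By consistent ``rotations'' of successive copies, the construction assigns each $t_i$ a target direction $\alpha_i\in[0,\pi)$, with the $\alpha_i$ roughly equi-spaced in $[0,\pi)$. Since every cycle edge is incident to at most one crossing (the one contributed by its gadget), the embedding is 1-planar by construction.

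Next I would fix an arbitrary embedding-preserving straight-line drawing $\Gamma$ of $G_n$ and derive the slope bound. By topological invariance of planar embeddings, the face $f$ is drawn as a simple polygon visiting the tips in the prescribed cyclic order. The key claim is that, at each tip $t_i$, the crossing inside $H_i$ together with the fixed cyclic order of fragments around it pins the slope of at least one of the two cycle edges incident to $t_i$ into a narrow angular window $A_i$ around $\alpha_i$. Once the construction is tuned so that the windows $A_1,\dots,A_k$ are pairwise disjoint modulo $\pi$, the selected edges at different tips must use pairwise distinct slopes, giving $\Omega(k)=\Omega(n)$ distinct slopes in $\Gamma$.

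The main obstacle is the gadget design: $H$ must be light enough to preserve 1-planarity under $\Theta(n)$-fold concatenation, yet rigid enough to pin each tip's incident edge within an angular window of width $O(1/n)$. The rigidity argument is topological in nature—the single crossing of $H_i$ fixes the cyclic order of its four quadrants relative to the rest of $G^*_n$, and a careful analysis of the small faces of $G^*_n$ adjacent to $t_i$ must convert this combinatorial rigidity into a quantitative angular confinement. Arguments in the style of Lemma~\ref{le:crossing-minimal} (controlling the cyclic order of dummy vertices) will likely feature here. Once the angular pinning is established, the slope separation follows routinely from the disjointness of the windows $A_i$.
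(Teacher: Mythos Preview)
Your proposal has a genuine gap at exactly the point you flag as ``the main obstacle'': a constant-size gadget with a single crossing cannot pin the slope of an incident edge into a window of width $O(1/n)$. All the constraints that an embedding-preserving drawing imposes are \emph{topological}---cyclic orders around vertices and crossings, incidence of faces---and a bounded-size piece of topology translates into at most a constant-width angular constraint on any one edge. There is no mechanism by which the single crossing of $H_i$, together with a bounded number of adjacent faces, can force the edge at $t_i$ to lie within an interval that shrinks as $n$ grows. The reference to Lemma~\ref{le:crossing-minimal} does not help here: that lemma is purely combinatorial (about re-embedding to improve connectivity of the planarization) and says nothing quantitative about angles.

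The paper's proof avoids this trap by not attempting any local pinning at all. It takes the cycle $a_1,\dots,a_{k+1},b_{k+1},\dots,b_1,a_1$ embedded so that $a_ia_{i+1}$ crosses $b_ib_{i+1}$ for every $i$, and argues \emph{globally}: in any straight-line realisation, the slope of $a_ia_{i+1}$ is strictly monotone in $i$ (each crossing forces a rotation in the same direction), while the total rotation along the chain is bounded by $\pi$. Hence the $k$ edges $a_ia_{i+1}$ all have distinct slopes. The $\Omega(n)$ bound thus emerges from a chain of $\Theta(n)$ strict inequalities, not from $\Theta(n)$ independent narrow windows. If you want to salvage your gadget idea, you would need the gadgets to \emph{interact} so that the constraint at $H_{i+1}$ is relative to the realised slope at $H_i$---but once you do that, you have essentially rediscovered the paper's ladder-of-crossings construction.
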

\begin{figure}[t]
  \subcaptionbox{Theorem~\ref{thm:lower-straight-2reg} \label{fig:lowerbound-2reg-a-main}}{\includegraphics[page=1]{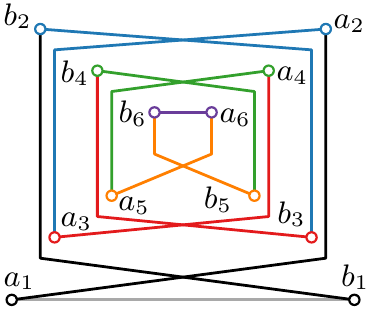}}
  \hfill
  \subcaptionbox{Lemma~\ref{lem:lower-straight-3reg}  \label{fig:lowerbound-3reg-main}}{\includegraphics[page=1,width=.27\textwidth]{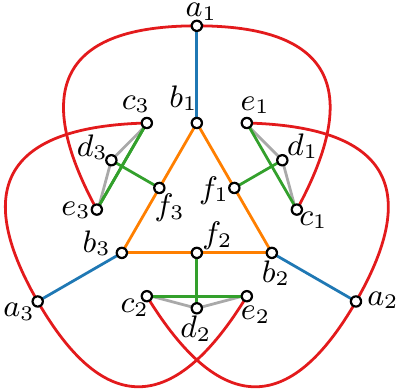}}
  \hfill
   \subcaptionbox{Theorem~\ref{thm:lower-straight-gen} \label{fig:lowerbound-gen-main}}{\includegraphics[page=3,width=.27\textwidth]{lowerbound-3reg}}
  \caption{The constructions for the results of Section~\ref{sec:lower}}
  \label{fig:lowerbound-main}%
\end{figure}
\begin{sketch}
  Let $G_k$ be the graph given by the cycle  $a_1\ldots,a_{k+1},b_{k+1},\ldots,b_1,a_1$
  and the embedding shown in Fig.~\ref{fig:lowerbound-2reg-a-main}. 
  Walking along the path $a_1,\ldots,a_{k+1}$, we find that the slope has to
  increase at every step.
\end{sketch}

\newcommand{\lowerStraightThreeRegText}{%
	There exist 3-regular 3-connected 1-plane graphs such that any 
	embedding-preserving straight-line drawing uses at least $18$ distinct slopes.}
\begin{lemma}\label{lem:lower-straight-3reg}
  \lowerStraightThreeRegText
\end{lemma}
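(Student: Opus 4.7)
The plan is to mimic and strengthen the approach of Theorem~\ref{thm:lower-straight-2reg}. In that proof, the two parallel ``rails'' $a_1,\dots,a_{k+1}$ and $b_1,\dots,b_{k+1}$ are tied together by a prescribed cyclic sequence of crossings, which forces the slopes of the edges along one rail to change monotonically.  For the 3-regular 3-connected setting I would reuse this slope-forcing idea, but combine many copies of it into a closed cyclic construction (visible in Fig.~\ref{fig:lowerbound-3reg-main}) so that, traversing the whole cycle, the edge directions have to wind around by a full $2\pi$.

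The concrete steps I would follow are: (1) Define a small ``rotator'' gadget: a 1-plane subgraph with two distinguished boundary edges (the \emph{entering} and \emph{leaving} edge) and a fixed embedding such that in any embedding-preserving straight-line drawing the slope of the leaving edge differs from that of the entering edge by at least $\pi/9$ in a fixed cyclic direction. The gadget should be 3-regular internally, with the two boundary edges being the only half-edges sticking out. (2) Glue a cyclic chain of $m$ such gadgets so that the leaving edge of one gadget is identified with the entering edge of the next, obtaining a 3-regular graph $H_m$. (3) Verify that, with appropriate redundancy in the gluing, $H_m$ is 3-connected and 1-plane. (4) Argue that in any embedding-preserving straight-line drawing the slopes of the $m$ shared boundary edges form a monotone sequence along the cycle that closes up; because they complete a full $2\pi$ turn and no two consecutive entries are within $\pi/9$ of each other, there must be at least $2\pi / (\pi/9) = 18$ distinct slopes. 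Choosing $m$ large enough ($m=18$ is the minimum) finishes the proof.

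The technical heart of the argument is step~(1): showing that a single gadget really forces a slope jump of at least $\pi/9$. The plan is to bound, in any embedding-preserving straight-line drawing, the angle a free port at a degree-3 vertex can form with the already-drawn edges incident to it. Each interior face at such a vertex has to be a \emph{convex} wedge (strictly less than $\pi$), and every crossing inside the gadget contributes two further constraints between the four slopes meeting at that crossing. Combining these constraints by a short angle-sum argument along the unique path of boundary faces through the gadget should give the required $\pi/9$ bound, in much the same spirit as the slope-increment argument in Theorem~\ref{thm:lower-straight-2reg} but with a quantitative estimate instead of just ``strictly larger''.

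The main obstacle I anticipate is matching the numerical constant: it is easy to argue that the slope must change by some positive amount per gadget, but pinning it down to exactly $\pi/9$ (so that $18$, and not some smaller number, is the correct lower bound) requires the gadget to be sharp. I would design it so that there is an explicit straight-line 1-planar drawing achieving slope change arbitrarily close to $\pi/9$, and a matching lower bound proved via the angle-sum relations above. Once this gadget lemma is in hand, 3-connectivity and 1-planarity of $H_m$ can be checked directly from its embedding (since each gadget is 1-plane and the gluing only identifies simple degree-3 edges), and the cyclic monotonicity argument then delivers the stated $18$-slope lower bound.
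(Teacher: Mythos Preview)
Your proposal does not follow the paper's approach and, as written, has real gaps.

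The paper exhibits a single fixed graph $G$ (Fig.~\ref{fig:lowerbound-3reg-main}) and argues directly that $18$ specific edges must have pairwise distinct slopes in any embedding-preserving straight-line drawing. The mechanism is a \emph{pseudo-triangle} argument: in any such drawing one rotates three triangular pieces $T_i=e_ic_ix_i$ by~$\pi$ about the crossing points $x_i$ (a slope-preserving operation); after this, the three chains $a_ic_id_ie_ia_{i+1}$ bound a pseudo-triangle. The paper then uses the fact that the edges of a pseudo-triangle, traversed in cyclic order, are strictly ordered by slope, yielding $12$ distinct slopes. Three further slopes come from the edges $a_ib_i$, each sandwiched at $a_i$ between two pseudo-triangle edges, and three more from the edges $e_ic_i$, which form a second pseudo-triangle when substituted for the pairs $e_id_i,c_id_i$. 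No quantitative per-gadget angle bound is used anywhere.

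Your plan, by contrast, rests entirely on step~(1): a gadget forcing a slope increment of at least $\pi/9$. You give neither a concrete gadget nor a proof of this bound; ``a short angle-sum argument \dots\ should give the required $\pi/9$'' is not a proof, and the constant $\pi/9$ is visibly reverse-engineered from the target~$18$ rather than derived from the gadget. Even granting such a gadget, your closing count is flawed: you conflate edge \emph{directions} (periodic modulo $2\pi$) with \emph{slopes} (periodic modulo $\pi$). A monotone cyclic sequence of directions winding once around and with consecutive steps $\ge\pi/9$ forces at least $18$ distinct \emph{directions}, but antipodal directions share a slope, so this can yield as few as $9$ distinct slopes (e.g., directions $k\pi/9$ for $k=0,\dots,17$). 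The paper's pseudo-triangle fact is precisely a statement about slopes, not directions, which is why it delivers the correct count without this pitfall.
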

\begin{sketch}
	Consider the graph depicted in Fig.~\ref{fig:lowerbound-3reg-main}. 
  We find that the slopes of the edges $(a_i,b_i),(a_i,c_i),(c_i,d_i),(c_i,e_i),(e_i,d_i),(e_i,a_{i+1})$
  have to be increasing in this order for every~$i=1,2,3$.
\end{sketch}

\newcommand{\lowerStraightThreeGenText}{%
	There exist 3-connected 1-plane graphs such that 
	any embedding-preserving straight-line drawing uses at least $9(\Delta-1)$ 
	distinct slopes.}
\begin{theorem}\label{thm:lower-straight-gen}
	\lowerStraightThreeGenText
\end{theorem}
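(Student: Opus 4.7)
The plan is to lift the construction of Lemma~\ref{lem:lower-straight-3reg} from cubic graphs to graphs of maximum degree $\Delta \ge 3$. That lemma exhibits a 3-regular 1-plane graph built from three cyclically composed six-edge gadgets, so that any embedding-preserving straight-line drawing realizes $6 \cdot 3 = 18$ pairwise distinct slopes along a strictly monotone closed slope walk; note that $18 = 9(\Delta-1)$ when $\Delta = 3$. The graph depicted in Fig.~\ref{fig:lowerbound-gen-main} generalizes this construction by attaching, at suitable vertices of each gadget, $\Delta-3$ additional edges (``spokes''), each of which the embedding will force to insert a distinct new slope into the same monotone walk.

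First, I would write down the augmented graph $G_\Delta$ explicitly. Starting from the 3-regular 3-connected 1-plane graph $H$ from Lemma~\ref{lem:lower-straight-3reg}, at each vertex that lies on the forced walk of $H$, attach $\Delta-3$ spokes into carefully chosen faces so that no new crossings are introduced (hence $G_\Delta$ remains 1-plane) and 3-connectivity is preserved (for instance, by ending the spokes in a common small 3-connected subgadget). The maximum degree of $G_\Delta$ then equals exactly $\Delta$.

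Second, I would replay the monotone-slope argument of Lemma~\ref{lem:lower-straight-3reg} on a longer closed walk in $G_\Delta$ that now visits every edge of the original walk together with each newly added spoke. At every vertex along this walk, the rotation around the vertex fixed by the embedding, combined with the convexity of the incident faces in any straight-line drawing, forces the next edge's slope to exceed the previous one strictly. Since the extended walk has length $9(\Delta-1)$, the $9(\Delta-1)$ slopes along it are pairwise distinct, yielding the claimed lower bound.

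The main obstacle is ensuring that each added spoke genuinely contributes a fresh slope to the monotone sequence, rather than being drawable with a slope already used elsewhere on the walk, while simultaneously preserving 1-planarity and 3-connectivity of $G_\Delta$. This will require a careful geometric analysis of the embedding at each augmented vertex: the cyclic order of edges around the vertex must interleave with the spokes exactly as needed for the monotonicity chain to extend through them, and the faces into which the spokes are placed must remain convex in any feasible embedding-preserving straight-line drawing, forcing the spoke's slope into its own open slot between those of the original six-edge gadget.
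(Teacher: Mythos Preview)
Your plan is the paper's approach: augment the Lemma~\ref{lem:lower-straight-3reg} graph and show each added edge is forced into its own slope interval. Two specifics need correcting.

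First, the paper adds $\Delta-3$ edges only at the nine vertices $a_i,c_i,e_i$ ($i=1,2,3$), not at every vertex of the forced walk; the vertices $b_i$ and $d_i$ remain cubic. This is what makes the count $18+9(\Delta-3)=9(\Delta-1)$ come out; attaching spokes at all walk vertices would overshoot.

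Second, the new edges are not spokes terminating in an auxiliary $3$-connected subgadget. At each $a_i$ the single edge $a_ib_i$ is replaced by a bundle of $\Delta-2$ edges all lying between $a_ic_i$ and $a_ie_{i-1}$ in the cyclic order at $a_i$, so the bundle's slopes are pairwise distinct and squeezed strictly between those two. At $c_i$ and $e_i$ the extra edges form \emph{nested chains} inside the triangle $e_ic_ix_i$ (where $x_i$ is the crossing of $a_ic_i$ and $a_{i+1}e_i$); the pseudo-triangle monotonicity already used in Lemma~\ref{lem:lower-straight-3reg} then applies to these subchains and forces their slopes apart and into the interval between $\slope(a_ic_i)$ and $\slope(e_ia_{i+1})$. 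So the obstacle you flag is resolved not by generic face convexity but by reusing the same pseudo-triangle argument one level deeper on a concrete nested structure.
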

\begin{sketch}
 Consider the graph depicted in Fig.~\ref{fig:lowerbound-gen-main}. 
 The degree of $a_i$, $c_i$, and~$e_i$ is $\Delta$.
 We repeat the proof of Lemma~\ref{lem:lower-straight-3reg}, but observe that
 the slopes of the $9(\Delta-3)$ added edges lie between the slopes of $(a_i,b_i),(a_i,c_i),(c_i,e_i)$, and $(e_i,a_{i+1})$.
 \end{sketch}


\section{Open problems}\label{sec:open}

The research in this paper gives rise to interesting questions, among them: (1) Is it possible to extend Theorem~\ref{thm:3con1bend} to all subcubic $1$-planar graphs? (2) Can we drop the embedding-preserving condition from Theorem~\ref{thm:lower-1bend}? (3) Is the $1$-planar slope number of $1$-planar graphs bounded by a function of the maximum degree?


\ifproc
\clearpage
\fi

\bibliographystyle{splncs04}
\bibliography{abbrv,1planarslopes}

\ifproc
\end{document}
\fi


\clearpage
\appendix

\section{Omitted material from Section~\ref{sec:prelim}}\label{app:prelim}

\begin{backInTime}{crossingMinimal}
\begin{lemma}
  \crossingMinimalText
\end{lemma}
\begin{proof}
  We first show how to iteratively remove all cutvertices from~$G^*$ that are
  dummy vertices. 
  Suppose that there is a cutvertex~$v$ in~$G^*$ that is a dummy vertex.
  Let~$a,b,c,d$ be the neighbors of~$v$ in counter-clockwise order, so
  the edges $(a,c)$ and $(b,d)$ cross in~$G$.
  
  \begin{figure}[b]
    \centering
    \subcaptionbox{$(v,a)$ is a bridge\label{fig:lemma1-bridge}}{\includegraphics[page=1]{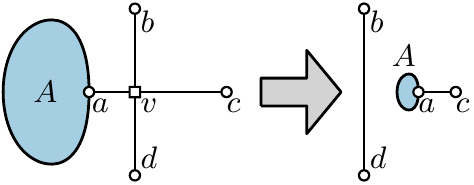}}
    \hfill
    \subcaptionbox{$v$ has no bridge\label{fig:lemma1-cutvertex}}{\includegraphics[page=2]{lemma1}}
    \caption{Eliminating dummy cutvertices from~$G^*$}
    \label{fig:lemma1}
  \end{figure}
  
  First, assume that one of the four edges of~$v$, say $(v,a)$, is a bridge,
  so removing~$v$ from~$G^*$ gives a connected component~$A$ that contains
  neither~$b,c,d$; see Fig.~\ref{fig:lemma1-bridge}.  We shrink~$A$
  and move it to the other side of~$v$. This eliminates the crossing between $(a,c)$
  and $(b,d)$ from~$G$. 
  
  We can now assume that there is no bridge at~$v$, so removal of~$v$ divides~$G^*$
  into two components~$A$ and~$B$. If~$a$ and~$c$ lie in the same component,
  $G$ is disconnected as there is no path from~$a$ to~$b$; hence, assume
  w.l.o.g. that~$a,b\in A$ and $c,d\in B$; see Fig.~\ref{fig:lemma1-cutvertex}.
  We flip~$B$, reroute the edge~$(a,c)$ along $(a,v)$ and $(v,c)$ and reroute
  $(b,d)$ along $(b,v)$ and $(v,d)$. This eliminates the crossing between
  $(a,c)$ and $(b,d)$ from~$G$.
  
  This shows the first part of the lemma. For the second part, suppose that~$G$
  is 3-connected and~$G^*$ has no dummy vertex as a cutvertex; otherwise, 
  apply the first part of the lemma. Assume that there is a separation
  pair~$u,v$ in~$G^*$ where~$v$ is a dummy vertex. Let again~$a,b,c,d$ be the 
  neighbors of~$v$ in counter-clockwise order. 
  
  First, assume that~$u$ is a real vertex. 
  Removal of~$u$ and~$v$ splits~$G^*$ in at most four connected
  components. If one of these connected components contains exactly one 
  neighbor of~$v$, say~$a$, there are at most two vertex-connected paths
  from~$a$ to~$c$ in~$G$: the edge $(a,c)$ and one path via~$u$. Hence, 
  there are two connected components~$A$ and~$B$ that contain two neighbors of~$v$
  each; assume w.l.o.g. that~$A$ contains~$a$. If~$A$ contains~$a$ and~$c$, 
  $u$ is a cutvertex in~$G$, which contradicts 3-connectivity.
  If~$A$ contains~$a$ and~$b$, we flip~$A$, reroute the 
  edge~$(b,d)$ along $(b,v)$ and $(v,d)$ and reroute
  $(a,c)$ along $(a,v)$ and $(v,c)$; see Fig.~\ref{fig:lemma1dummy-dummyrealpair}. This eliminates the crossing between
  $(a,c)$ and $(b,d)$ from~$G$. If~$A$ contains~$a$ and~$d$, we proceed
  analogously.
  
  Second, assume that~$u$ is also a dummy vertex with neighbors $a',b',c',d'$ 
  in counter-clockwise order. Removal of~$u$ and~$v$ splits~$G^*$ in at most four connected
  components. If one of these connected components contains exactly one 
  neighbor of~$v$, say~$a$, and exactly one neighbor of~$u$, say~$a'$, then 
  there are at most two vertex-connected paths
  from~$a$ to~$c$ in~$G$: the edge $(a,c)$ and one path via the edge~$(a',c')$. 
  If one of the connected components contains exactly three neighbors of~$v$,
  say~$a,b,c$, and exactly one neighbor of~$u$, say~$a'$, then $d,a'$ is
  a separation pair in~$G$, as it separates $a,b,c$ from $b',c',d'$.
  Hence, each of these connected components contains exactly two neighbors
  of one of~$v$ and~$u$. Let~$A$ be a connected component and assume w.l.o.g.
  that it contains~$a$ and one more neighbor of~$v$.
  If~$A$ contains~$a$ and~$c$, there is some neighbor of~$u$ that is not
  in~$A$, say~$a'$. Since~$(a,c)\in A$ and $b,d\notin A$, all paths from~$a$
  to~$a'$ in~$G$ have to traverse the edge $(a',c')$ or $(b',d')$, so there
  are at most two of them; a contradiction to 3-connectivity.
  If~$A$ contains~$a$ and~$b$, we flip~$A$ and reroute the 
  edges~$(a,c)$ and $(b,d)$ to eliminate their crossing from~$G$. Note that,
  if~$A$ contains also exactly two neighbors of~$u$,  this also
  eliminates the crossing between $(a',c')$ and $(b',d')$; see Fig.~\ref{fig:lemma1dummy-dummydummypair}. If~$A$ contains~$a$ and~$d$, we proceed
  analogously.

	\begin{figure}[t]
    \centering
		\subcaptionbox{$A$ contains $a,b$ and $u$ is real vertex.\label{fig:lemma1dummy-dummyrealpair}}{\includegraphics[page=1]{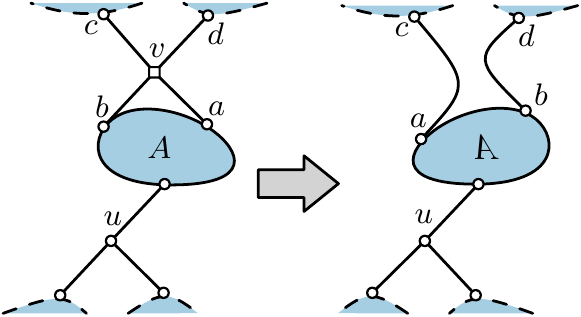}}
				 \hfill
    \subcaptionbox{$A$ contains $a,b$ and $u$ is dummy vertex.\label{fig:lemma1dummy-dummydummypair}}{~~\includegraphics[page=2]{lemma1Dummy}~~}
    \caption{Eliminating a dummy separation pair from~$G^*$}
    \label{}
  \end{figure}
  
  Each step reduces the number of crossings in the embedding, so it terminates 
  with an embedding that has a 3-connected planarization.
\end{proof}
\end{backInTime}

\section{Omitted material from Section~\ref{sec:3con1bend}}\label{app:3con1bend}

\begin{backInTime}{stretch}
\begin{figure}[t]
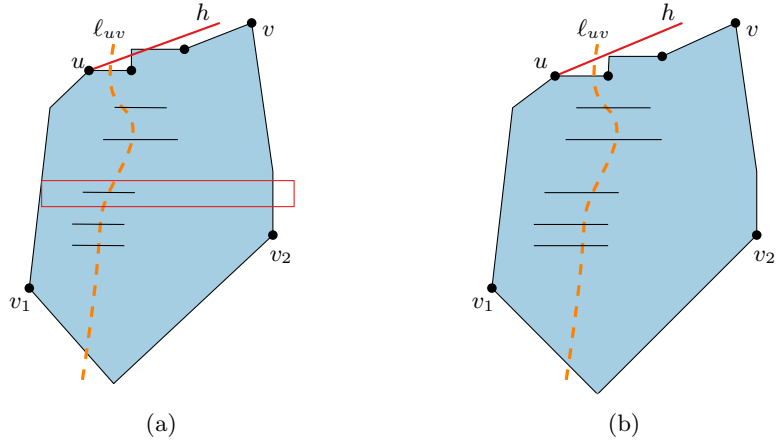

  \begin{subfigure}[b]{.47\textwidth}
    \centering
    \includegraphics[page=5]{figures/example}
    \caption{}
    \label{fig:stretch-1}
  \end{subfigure}
  \hfil
  \begin{subfigure}[b]{.47\textwidth}
    \centering
    \includegraphics[page=6]{figures/example}
    \caption{}
    \label{fig:stretch-2}
  \end{subfigure}
  \caption{Illustration for Lemma~\ref{le:stretch}}
  \label{fig:stretch}
\end{figure}

\begin{lemma}
  \stretchText
\end{lemma}
\begin{proof}
Refer to Fig.~\ref{fig:stretch}.  Suppose there is a half-line $h$ that 
originates at $u$ (the argument is analogous for $v$)  with  slope in the set 
$\{\pi/4, \pi/2\}$ that intersects the outer face of $\Gamma_i$.
Let~$s$ be the first edge segment of~$P_i(u,v)$ that is intersected by~$h$.
By the slopes of~$h$, we have that~$s$ is drawn with slope~$\pi/2$. Then,~$s$
must be traversed upwards when going from $u$ to $v$ in $P_i(u,v)$, and thus 
by \ref{P4} there is a horizontal segment before~$s$.

Let~$e$ be the edge containing this horizontal segment. By \ref{P4}, there is a $uv$-cut 
with respect to $e$ and there is a $y$-monotone curve $\ell_{uv}$ 
that cuts the horizontal segments of this cut. Let $C_u$ and $C_v$ be the two 
components defined by the $uv$-cut, such that $C_u$ contains $u$ and $C_v$ 
contains $v$. We shift all vertices in $C_v$ and all edges having both end-vertices 
in $C_v$ to the right by $\sigma$ units, for some suitable $\sigma>0$. 
All vertices in $C_u$ and all edges having both end-vertices in $C_u$ are 
not modified. Furthermore, the edges having an end-vertex in $C_u$ and the 
other end-vertex in $C_v$ are all and only the edges of the $uv$-cut, and 
thus they all contain a horizontal segment in $\Gamma_i$ that can be 
stretched by $\sigma$ units. Finally, note that $(v_1,v_2)$ is also part of 
the $uv$-cut, but it does not contain any horizontal segment; however, by \ref{P3} 
its two segments can be always redrawn by using the SE port of~$v_1$ and the 
SW port of~$v_2$. For a suitable choice of $\sigma$, this operation removes 
the crossing between $h$ and~$s$.
Moreover, no new edge crossing can appear in the drawing because $\ell_{uv}$ 
intersects only the edge segments of the cut. 
Hence, we can repeat this procedure until all crossings between $h$ and 
segments of~$P_i(u,v)$ are resolved. 
The resulting drawing is clearly still valid and stretchable.
\end{proof}
\end{backInTime}

\begin{backInTime}{gkvalid}
\begin{lemma}
  \gkvalidText
\end{lemma}
\begin{proof}
In all the cases used by our construction, we guaranteed the drawing 
be valid and attachable. Concerning stretchability, observe that \ref{P3} is guaranteed 
by Lemma~\ref{le:stretch}. To show \ref{P4}, one can use induction on  $i \le K-1$ as 
follows. 

In the base case $i=2$, we have that $\Gamma_2$ is clearly 
stretchable by construction. When adding $\V_i$ to $\Gamma_{i-1}$, 
we have that \ref{P4} holds by induction for all pairs of vertices that are consecutive 
both in $\Gamma_{i-1}$ and in $\Gamma_i$, because $P_{i-1}(u,v)=P_i(u,v)$. 
Also, the vertices in $\V_i$ are all attachable vertices. We distinguish the following
cases.

\setcounter{casecounter}{0}
\ccase{c:stretch-singleton} $\V_i$ is singleton. Then,~$v^i$ is consecutive with either $u_l$ or the attachable 
vertex $w$ before~$u_l$, and with either $u_r$ or the attachable vertex $w'$ after $u_r$. 

\subcase{sc:stretch-signleton-ul} $v^i$ is consecutive to $u_l$ (resp., $u_r$). Then, $u_l$ (resp., $u_r$) has degree four and hence is dummy. 
However, now~$u_l$ (resp., $u_r$) is not L-attachable (resp., R-attachable) anymore,
so \ref{P4} holds.

\subcase{sc:stretch-signleton-w} $v^i$ is consecutive to $w$ (a symmetric argument applies to $w'$). Then, observe 
first that \ref{P4} holds for $P_{i-1}(w,u_l) = P_i(w,u_l)$. 
Also, observe that if there is a horizontal segment in $P_i(w,u_l)$, then \ref{P4} holds for $P_i(w,v^i)$ (even if $(u_l,v^i)$ contains a vertical segment and even if $v^i$ is real). 

\subsubcase{ssc:stretch-signleton-w-rr} Both $w$ and $u_l$ are real. Then, there is a 
  horizontal segment in $P_i(w,u_l)$ and hence \ref{P4} holds for $P_i(w,v^i)$. 

\subsubcase{ssc:stretch-signleton-w-dr} $w$ is dummy and~$u_l$ is real. Then,~$u_l$ is R-attachable. 
  Furthermore, this means that $(u_l,v^i)$ will not be drawn as a vertical segment. 
  Hence, \ref{P4} holds for $P_i(w,v^i)$.
  
\subsubcase{ssc:stretch-signleton-w-dd} Both~$w$ and~$u_l$ are dummy. Then,~$v^i$ is real. 
  If~$v^i$ is the only successor
  of~$u_l$, then it is drawn with Case C3L in Fig.~\ref{fig:Cx}, so there is a
  horizontal segment on $(u_l,v^i)$. 
  Otherwise,~$u_l$ has another successor~$x$ in~$G_i$. If~$x$ is an R-successor
  of~$u_l$, then~$v^i$ is drawn with Case C1R, C2R, C2sR, or C3sR in Fig.~\ref{fig:Cxx}, column~2.
  In each case, $(u_l,v^i)$ is drawn with a horizontal segment, so \ref{P4} holds.
  If~$x$ is an L-successor of~$u_l$, then~$v^i$ is drawn with Case C1L, C2L, C2sL,
  or C3L in Fig.~\ref{fig:Cxx}, column~2. In each case, the edge $(u_l,v^i)$ goes
  downward from~$v^i$.

\begin{figure}[t]
  \centering
	\includegraphics{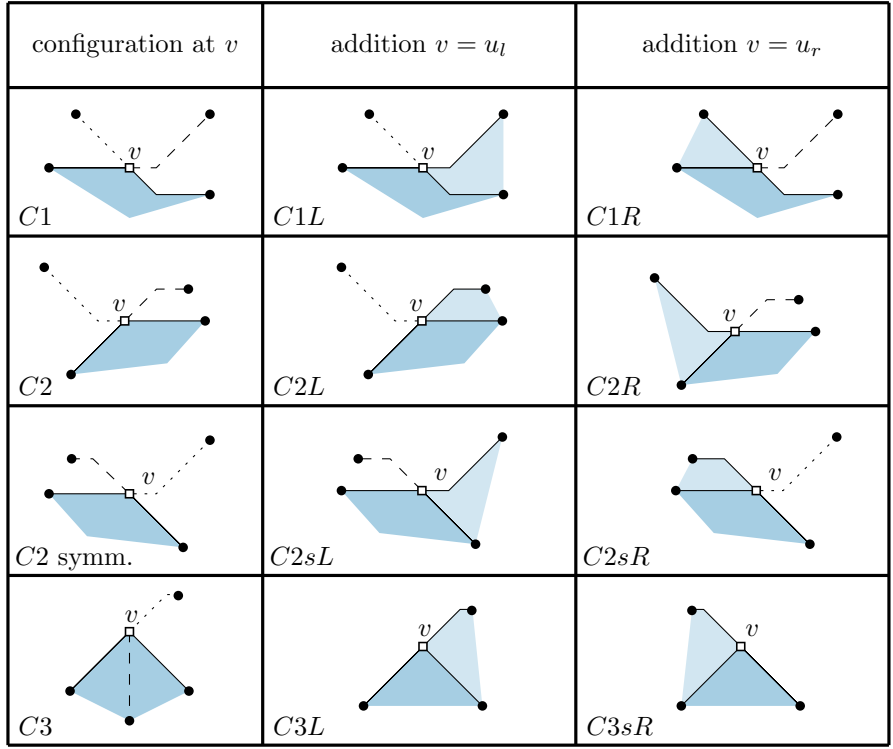}
	\caption{Cases for dummy vertices with one successor}
	\label{fig:Cx}
\end{figure}
  
  We move along $P(w,v^i)$ from~$v^i$ to~$w$ until we
  either arrive at an edge of a chain, at~$w$, or at a dummy vertex~$y\neq w$. 
  Since a real vertex cannot
  have two successors, this walk only moves downwards. 
  Recall that all edges of a chain are drawn with a horizontal segment, 
  so if we arrive at such an edge, then \ref{P4} holds. 
  
  If we arrive at
  a dummy vertex~$y\neq w$, then it has an L-successor in~$G_i$. 
  Since~$y$ and~$v^i$ are not consecutive,
  all of its successors are already drawn, so we are in one of the cases
  of Fig.~\ref{fig:Cxx}, column~2 or~3, that has at least one L-successor.
  Then, either one of the edges of~$y$ on $C_i$ has a horizontal segment,
  so \ref{P4} holds, or the edge from~$y$ goes downwards (Cases C2L and C3L in column~2), 
  and we can continue our walk. 
  
  If we arrive at~$w$ then the last edge we traverse was 
  to~$w$ from an L-successor of~$w$ in~$G_i$. Hence,~$w$ is not L-attachable.
  Further, since our walk only went downwards from~$v^i$ to~$w$, there was no upwards 
  vertical on this segment. Hence, \ref{P4} holds.

\subsubcase{ssc:stretch-signleton-w-rd} $w$ is real and~$u_l$ is dummy, then~$v^i$ is real.
  We proceed exactly as in Case~\ref{ssc:stretch-signleton-w-dd}. However, since~$w$
  is real and attachable, the last edge cannot be from a successor of~$w$ to~$w$.
  Hence, it is either an edge of a chain, so it has a horizontal segment, or on
  the way we encountered a dummy vertex that has an edge with a horizontal segment.
  In either case, \ref{P4} holds.

\ccase{c:stretch-chain} $\V_i$ is a chain.
  Note first that \ref{P4} holds for each pair of consecutive vertices $u$ and $v$ such 
  that both of them are in $\V_i$, since all edges of $\V_i$ contain a horizontal 
  segment. If $u \in \V_j$ and $v \in \V_i$ (resp., $u \in \V_i$ and $v \in \V_j$) 
  with $j<i$, then $v=v^i_1$ (resp., $u=v^i_l$), and a similar argument as for the 
  singletons can be applied. 
  This completes the case analysis.

\begin{figure}
  \centering
	\includegraphics{2succ-addition}
	\caption{Cases for dummy vertices with two successors}
	\label{fig:Cxx}
\end{figure}

Finally, we should prove that for every edge $e$ of $P_i(u,v)$ such that $e$ contains a 
horizontal segment, there exists a $uv$-cut of $G_i$ with respect to $e$ 
whose edges all contain a horizontal segment in $\Gamma_i$ except 
for $(v_1,v_2)$, and such that there exists a $y$-monotone curve that passes through all 
and only such horizontal segments and through $(v_1,v_2)$. Again, 
this is true by induction for all such pairs of vertices that are consecutive  both in $\Gamma_{i-1}$ and in $\Gamma_i$, 
because $P_{i-1}(u,v)=P_i(u,v)$. 

If both $u \in \V_i$ and $v \in \V_i$, then $\V_i$ is a chain. 
 Let $u_l$ and $u_r$ be the two vertices on $C_{i-1}$ used by $\V_i$ to attach 
to $\Gamma_{i-1}$. Note that, by \ref{P4}, $P_{i-1}(u_l,u_r)$ does not contain any horizontal segment only 
if it contains neither pairs of consecutive real vertices, nor vertical segments. 
It is not difficult to see that this situation never occurs, and hence there is
at least an edge $e$ in $P_{i-1}(u_l,u_r)$ such that $e$ 
contains a horizontal segment, there exists a $uv$-cut of~$G_{i-1}$ with 
respect to $e$ whose edges all contain a horizontal segment in $\Gamma_{i-1}$ 
except for $(v_1,v_2)$, and such that there exists a $y$-monotone curve that 
passes through all and only such horizontal segments and through $(v_1,v_2)$. 
Now, we can add the edge $(u,v)$ to this cut, since this edge contains a 
horizontal segment by construction. Also, $(u,v)$ is on the same face 
of $\Gamma_i$ as $e$ and above it, so the $y$-monotone curve that passes 
through $e$ can be suitably modified so to also pass through the horizontal 
segment of $(u,v)$. 

If $u \in \Gamma_{i-1}$ and $v \in \V_i$ (the symmetric case is 
analogous), as explained above, we have that $P_i(u,v)$ is constructed 
from $P_{i-1}(u,u_l)$, where $u_l$ is the leftmost predecessor of $v$ (which is 
either a singleton $v^i$ or the first vertex $v^i_1$ of a chain). Then, the 
only edge for which we may need to prove the property is the edge $(u_l,v)$ 
and only if it contains a horizontal segment. If so, we can again exploit the 
fact that there is at least an edge $e$ in $P_{i-1}(u_l,u_r)$ (where $u_r$ is 
the rightmost predecessor of $v$ if $\V_i$ is a singleton or of $v^i_l$ 
if $\V_i$ is a chain) for which \ref{P4} holds by induction.
\end{proof}
\end{backInTime}

\section{Omitted proofs from Section~\ref{sec:2bends}}\label{app:2bends}

\begin{backInTime}{bendInvariants}
\begin{lemma}
  \bendInvariantsText
\end{lemma}
\begin{proof}
  By construction, $G_i^*$ is biconnected. 
  First, observe that every face in~$G_i^*$
  contains at least two real vertices since no two dummy vertices can be 
  adjacent. Hence, there is some face that contains the real vertex~$t=u_i$ and some real vertex~$s$.
  We use these two vertices to compute an $st$-order~$\sigma$ and use the algorithm of
  Liu et al.\ to draw~$G_i^*$. We first show that this drawing satisfies all
  invariants. Invariants (I1) and (I2) are trivially satisfied.
  
  Since~$s$ and~$t$ are real vertices, both have degree at most~3, so there
  are no U-shapes in the drawing; since U-shapes are the only edges that are
  not drawn $y$-monotone from its source to its target, this satisfies (I3). 
  By construction, all edges in~$G_i^*$ are
  drawn with at most 2 bends; hence, invariant (I4) holds.
  
  Consider a dummy vertex~$v$ in~$G_i^*$ with neighbors $a,b,c,d$ in
  clockwise order; hence, the edges $(a,c)$ and $(b,d)$ cross in the
  given embedding of~$G_i$. Assume w.l.o.g. that~$(v,a)$ uses the S port,
  $(v,b)$ uses the W port, $(v,c)$ uses the N port, and $(v,d)$ uses
  the E port at~$v$. Since there are no U-shapes in the drawing,
  both $(v,a)$ and $(v,c)$ have to be drawn as a vertical or an L-shape,
  so they both have at most 1 bend.
  
  Consider now the edges $(v,b)$ and $(v,d)$. Both edges are drawn as a horizontal,
  an L-shape, or a C-shape. Recall that~$b$ and~$c$ are real vertices, so they
  have at most degree~3. If~$(v,b)$ starts in~$b$, it uses the N or the E port
  at~$b$, but it uses the W port at~$v$, so it cannot be a C-shape.
  If $(v,b)$ ends in~$b$, it uses the W or the S port at~$b$,
  and it can only be a C-shape if it uses the W one.
  Symmetrically, If~$(v,d)$ starts in~$d$, it uses the N port or the E port
  at~$d$ and it can only be a C-shape if it uses the E one.
  If $(v,d)$ ends in~$d$, it uses the W or the S port at~$d$,
  but it uses the E port at~$v$, so it cannot be a C-shape.
  This establishes invariants (I5) and (I6) and proofs the lemma.
\end{proof}
\end{backInTime}

\begin{backInTime}{bendBicon}
\begin{lemma}
  \bendBiconText
\end{lemma}
\begin{proof}
  It remains to show the cases that the N port at~$u$ is not free.
  \setcounter{casecounter}{1}
  
  \ccase{c:bottomtop} The N port at~$u$ is used by an edge $(u,w)$ and the W port is free.
  We distinguish three more cases based on the drawing style of $(u,w)$.
  
  \begin{figure}[t]
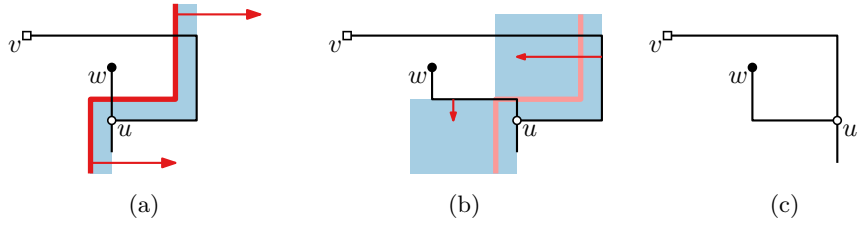

    \centering
    \subcaptionbox{}{\includegraphics[page=4]{2bend-removeCTiny}}
    \hfill
    \subcaptionbox{}{\includegraphics[page=5]{2bend-removeCTiny}}
    \hfill
    \subcaptionbox{}{\includegraphics[page=6]{2bend-removeCTiny}}
    \caption{Proof of Lemma~\ref{lem:2bend-bicon}, Case~\ref{sc:bottomtop-up}}
    \label{fig:2bend-bottomtop-up}
  \end{figure}
  
  \subcase{sc:bottomtop-up} $(u,w)$ is a vertical edge;
  see Fig.~\ref{fig:2bend-bottomtop-up}.
  We create a curve~$S$ as in Case~\ref{c:bottomleft} except that we do not
  pass the vertical segment of~$(u,v)$ but extend it upwards to infinity before.
  We stretch the drawing by moving~$S$ to the right such that~$u$ is
  placed below the top-right bend point of $(u,v)$. Now the edge
  $(u,w)$ is drawn with 2 bends, but the area between~$u$ and the two bend
  points is empty and the W port of~$u$ is unused, so we can make an L-shape out
  of $(u,w)$ that uses the W port at~$u$. Furthermore,
  similar to Case~\ref{c:bottomleft}, the region between~$u$ and the top-right
  bend point of $(u,v)$ is free and now the N port of~$u$ is unused, 
  so we can make an L-shape out of $(u,v)$ that uses the N port at~$u$. 
  
  \begin{figure}[b]
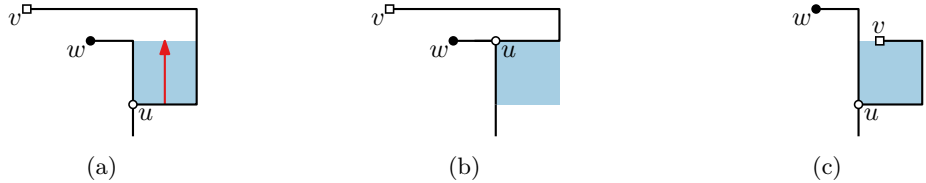

    \centering
    \subcaptionbox{}{\includegraphics[page=7]{2bend-removeCTiny}}
    \hfill
    \subcaptionbox{}{\includegraphics[page=8]{2bend-removeCTiny}}
    \hfill
    \subcaptionbox{}{\includegraphics[page=9]{2bend-removeCTiny}}
    \caption{Proof of Lemma~\ref{lem:2bend-bicon}, Case~\ref{sc:bottomtop-left}}
    \label{fig:2bend-bottomtop-left}
  \end{figure}
  
  \subcase{sc:bottomtop-left} $(u,w)$ is an L-shape and~$w$ lies to the left of~$u$;
  see Fig.~\ref{fig:2bend-bottomtop-left}.
  Assume first that~$w$ lies below~$v$.
  We claim that there is no vertex in the region bounded by the vertical segment
  of~$(u,w)$ from the left, the first horizontal segment of~$(u,v)$ from the bottom,
  the vertical segment of~$(u,v)$ to the right and the $y$-coordinate of~$w$
  from the top. Assume to the contrary that there is some vertex in this
  region and let~$x$ be the bottom-most one. 
  Every vertex has at least one incoming edge, and by invariant (I3) the target 
  vertex is not below the source vertex. Hence, there has
  to be an edge from some vertex~$y$ to vertex~$x$ such that~$y$ does not
  lie above~$x$. Since the E and the N port of~$u$ are already used, 
  $y$ cannot be~$u$. If~$y$ lies below~$x$,  by choice of~$x$ 
  the edge~$(y,x)$ has to intersect $(u,w)$ or $(u,v)$,
  which contradicts invariant (I1).
  Otherwise, $y$ lies next to~$x$ and $(y,x)$ is a horizontal segment.
  If~$y$ lies to the right (left) of~$x$, we choose the rightmost (leftmost) vertex~$z$
  such that there is a directed path from~$z$ to~$x$ that only contains horizontally
  drawn edges. If~$z$ lies outside the region,  some edge on this path has
  to cross $(u,v)$ or $(u,w)$, which contradicts (I1). Otherwise, we repeat
  the argument with~$z$; since~$z$ cannot have an incoming edge from a vertex
  with the same $y$-coordinate, it cannot have any incoming edge without 
  a crossing.
  
  Hence, this region is empty and we can move~$x$ upwards to the same $y$-coordinate
  as~$w$. Now $(u,w)$ uses the W port at~$u$ and we can use Case~\ref{c:bottomleft}
  to make $(u,v)$ an L-shape.
  
  On the other hand, if~$w$ does not lie below~$v$,  we can use the same
  argument that the area described above, but bounded form the top by the
  $y$-coordinate of~$v$, is empty. However, there has to be an edge that uses
  the S port of~$v$, and it has to be $y$-monotone by invariant (I3),
  so its source has to lie below~$v$; a contradiction.
  
  \begin{figure}[t]
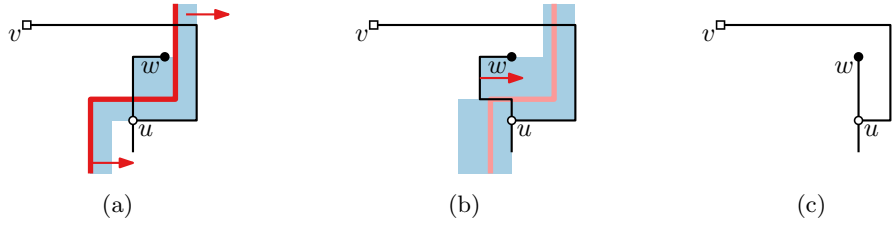

    \centering
    \subcaptionbox{}{\includegraphics[page=10]{2bend-removeCTiny}}
    \hfill
    \subcaptionbox{}{\includegraphics[page=11]{2bend-removeCTiny}}
    \hfill
    \subcaptionbox{}{\includegraphics[page=12]{2bend-removeCTiny}}
    \caption{Proof of Lemma~\ref{lem:2bend-bicon}, Case~\ref{sc:bottomtop-right}}
    \label{fig:2bend-bottomtop-right}
  \end{figure}
  
  \subcase{sc:bottomtop-right} $(u,w)$ is an L-shape and~$w$ lies to the right of~$u$;
  see Fig.~\ref{fig:2bend-bottomtop-right}.
  By the same argument as in Case~\ref{sc:bottomtop-left},~$w$ has to lie below~$v$.
  We can also use the exact argument to show that the region between $(u,w)$
  and $(u,v)$ is empty.
  
  We create a curve~$S$ as in Case~\ref{sc:bottomtop-up}.
  We stretch the drawing by moving~$S$ to the right such that~$u$ is
  placed directly below~$w$. Because of the empty region, we can now make $(u,w)$
  a vertical edge and then use Case~\ref{sc:bottomtop-up} to make $(u,v)$
  an L-shape.
  
  \begin{figure}[b]
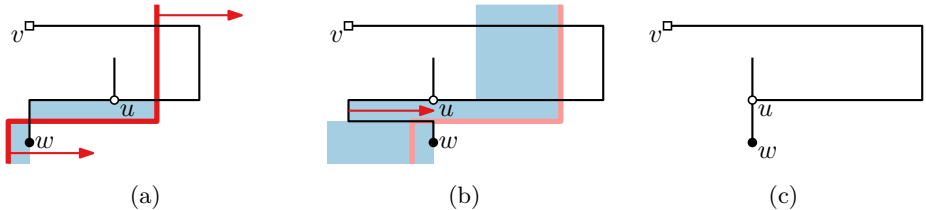

    \centering
    \subcaptionbox{}{\includegraphics[page=13]{2bend-removeCTiny}}
    \hfill
    \subcaptionbox{}{\includegraphics[page=14]{2bend-removeCTiny}}
    \hfill
    \subcaptionbox{}{\includegraphics[page=15]{2bend-removeCTiny}}
    \caption{Proof of Lemma~\ref{lem:2bend-bicon}, Case~\ref{sc:topleft-down}}
    \label{fig:2bend-topleft-down}
  \end{figure}
  
  \ccase{c:topleft} The N port and the W port at~$u$ are used. 
  Let~$(u,w)$ be the edge that uses the W port at~$u$; since $(u,v)$ is an
  outgoing edge and the edge at the N port has to be outgoing by invariant (I3),
  $(u,w)$ is an incoming edge at~$u$. By invariants (I3) and (I5), 
  it has to be drawn as a horizontal
  segment or as an L-shape such that~$w$ lies below~$u$.
  We distinguish two more cases based on the drawing style of $(u,w)$.
  
  \subcase{sc:topleft-down}  $(u,w)$ is an L-shape and~$w$ lies below~$u$;
  see Fig.~\ref{fig:2bend-topleft-down}. We create a curve~$S$ as follows: We
  start at some point~$p$ slightly to the top left of~$w$ and extend it downward 
  to infinity. Then we extend it from~$p$ to the right until it passes~$u$ and
  extend it upwards to infinity. 
  We place the
  curve close enough to $(u,w)$ such that no vertex or bend point
  lies between~$S$ and $(u,w)$.
  Then, we stretch the drawing by moving~$S$ to the right such that~$w$ is
  placed below~$u$. After this operation, the S port of~$u$ is free and 
  there is no edge or vertex on the vertical segment between~$u$ and~$w$,
  so we can make $(u,w)$ a vertical edge and then use Case~\ref{c:bottomtop}
  to make $(u,v)$ an L-shape.
  
  \begin{figure}[t]
    \centering
    \subcaptionbox{}{\includegraphics[page=16]{2bend-removeCTiny}}
    \hfill
    \subcaptionbox{}{\includegraphics[page=17]{2bend-removeCTiny}}
    \hfill
    \subcaptionbox{}{\includegraphics[page=18]{2bend-removeCTiny}}
    \caption{Proof of Lemma~\ref{lem:2bend-bicon}, Case~\ref{sc:topleft-left}}
    \label{fig:2bend-topleft-left}
  \end{figure}
  
  \subcase{sc:topleft-left}  $(u,w)$ is a horizontal edge
  and~$w$ is at the same $y$-coordinate as~$u$;
  see Fig.~\ref{fig:2bend-topleft-left}. We now create an $x$-monotone curve~$S$ 
  as follows: We start at some point~$p$ slightly to the top left of~$u$ and 
  extend it leftward to infinity. Then we extend it from~$p$ to the bottom until it passes~$(u,w)$ and
  extend it rightwards to infinity. We place the
  curve close enough to $(u,w)$ such that no vertex or bend point
  lies between~$S$ and $(u,w)$.
  Then, we stretch the drawing by moving~$S$ upwards for a short distance.
  After this operation, the S port of~$u$ is free and 
  the whole region between~$w$ and~$u$ is empty,
  so we can make $(u,w)$ an L-shape and then use Case~\ref{c:bottomtop}
  to make $(u,v)$ an L-shape.
  
  Obviously, each of the above operations maintains all the invariants.
  Hence, by repeating them for every C-shape, we obtain the desired drawing
  of~$G_i$. 
\end{proof}
\end{backInTime}

\section{Omitted proofs from Section~\ref{sec:lower}}\label{app:lower}

The following construction is the same as the one of Hong et 
al.~\cite{DBLP:conf/cocoon/HongELP12} to prove an exponential area lower
bound for straight-line drawings of 1-plane graphs, with
two edges added to make the graph biconnected.

\begin{backInTime}{lowerStraight}
\begin{theorem}
  \lowerStraightTwoRegText
\end{theorem}
\begin{figure}[t]
  \subcaptionbox{The graph $G_5$ \label{fig:lowerbound-2reg-a}}{\includegraphics[page=1]{lowerbound-2reg}}
  \hfill
  \subcaptionbox{A straight-line drawing of~$G_3$}{\includegraphics[page=2]{lowerbound-2reg}}
  \hfill
   \subcaptionbox{The tower of the $W_i$s (schematic)\label{fig:lowerbound-2reg-c}}{\includegraphics[page=3]{lowerbound-2reg}}
  \caption{The construction for Theorem~\ref{thm:lower-straight-2reg}}
  \label{fig:lowerbound-2reg}%
\end{figure}
\begin{proof}
  Let $G_k$ be the plane graph given by the cycle  $a_1\ldots,a_{k+1},b_{k+1},\ldots,b_1,a_1$
  and the embedding shown in Fig.~\ref{fig:lowerbound-2reg-a}. 
  We denote the crossing 
  between two edges $a_ia_{i+1}$ and $b_ib_{i+1}$ with $c_i$. Let $W_i$ be the 
  quadrilateral with vertices $c_{i-1}$, $a_i$, $c_i$ and $b_i$. Note that the 
  edges in $W_i$ incident to $c_i$ and the edges in $W_{i+1}$ incident to $c_i$
  have the same (pair of) slopes. Hence, we can rotate $W_{i+1}$ by $\pi$ around $c_i$ and align it with $W_i$ such 
  that the edges meeting in $c_i$ of both quadrilaterals overlap. The rotation will maintain the slopes.
  Thus we can draw a \enquote{tower} of disjoint copies of all $W_i$ (see Fig.~\ref{fig:lowerbound-2reg-c}). 
  For all $1<i<k$ 
  the supporting lines of $a_ic_i$ and $a_{i+1}c_{i+1}$ differ by rotation (angle $<\pi$) in the same direction.
  The total rotation of those edges cannot exceed $\pi$, since no $a_ic_{i}$ can \enquote{overtake} $b_2a_2$.
  As a consequence, the slopes of all edges $a_ic_i$ are different and thus also all the slopes of the edges $a_ia_{i+1}$
  have to be different.
\end{proof}

\begin{lemma}
	\lowerStraightThreeRegText
\end{lemma}
\begin{figure}[b]
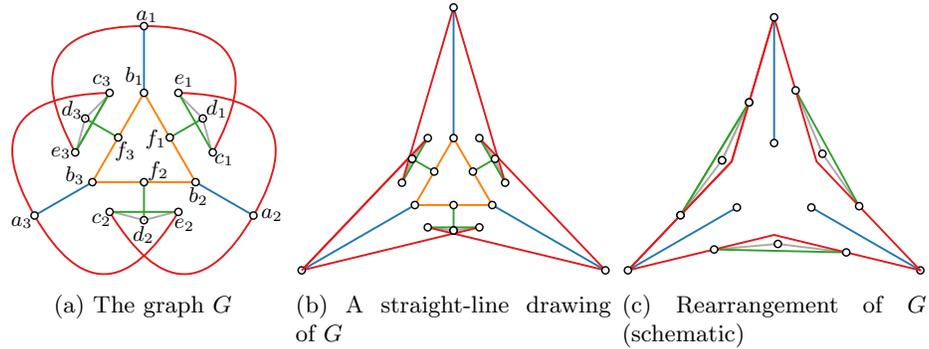

  \subcaptionbox{The graph $G$  \label{fig:lowerbound-3reg-a}}{\includegraphics[page=1,scale=.9]{lowerbound-3reg}}
  \hfill
  \subcaptionbox{A straight-line drawing of~$G$  \label{fig:lowerbound-3reg-b}}{\includegraphics[page=2,scale=.9]{lowerbound-3reg}}
    \hfill
  \subcaptionbox{Rearrangement of~$G$ (schematic)  \label{fig:lowerbound-3reg-c}}{\includegraphics[page=4,scale=.9]{lowerbound-3reg}}
  \caption{The construction for Lemma~\ref{lem:lower-straight-3reg}}
  \label{fig:lowerbound-3reg}
\end{figure}
\begin{proof}
	Consider the graph~$G$ depicted in Fig.~\ref{fig:lowerbound-3reg}a--b. To simplify the analysis
	we exploit a similar idea as in Theorem~\ref{thm:lower-straight-2reg}. Let $x_i$ be the 
	crossing between $a_ic_i$ and $a_{i+1}e_{i}$ (indices modulo 3).
	Fix any straight-line drawing of~$G$ and let $T_i$ be the triangle $e_ic_ix_i$ including 
	the two segments~$e_id_i$ and $c_id_i$. For $i=1,2,3$ we cut $T_i$, rotate it by $\pi$ around $x_i$ 
	and put it back to the drawing. This leaves the slopes unchanged. By this we obtain a drawing
	which contains a pseudo-triangle, whose chains ($a_ic_id_ie_ia_{i+1}$) 
	have 4 edges (see Fig.~\ref{fig:lowerbound-3reg-c}). Further, for every chain there is
	an edge ($e_ic_i$) between the second and fourth vertex cutting off~$d_i$. 
	The edges of a pseudo-triangle have different slopes. Thus, we have 12 different slopes here. 
	Moreover, if you traverse the edges of a pseudo-triangle in cyclic order they will be 
	ordered by slope. Since $a_ib_i$ 
	is sandwiched  between $a_ic_i$ and $a_ie_{i+2}$ we have three more distinct slopes. Finally, we note
	that replacing the edges $e_id_i$ and $c_id_i$ with $e_ic_i$ gives another pseudo triangle that avoids
	the slopes of $e_id_i$ and $c_id_i$. As a consequence, the edges $e_ic_i$ will give us 
	three new slopes and we end up with 18 different slopes.
  
  To obtain an infinite family of graphs, observe that we do not use the edges
  between~$b$- and~$f$-vertices in our analysis. Hence, we can subdivide
  the edges $(b_1,f_3)$ and $(b_2,f_1)$ several times and connect pairs of
  subdivision vertices.
\end{proof}

\begin{theorem}
	\lowerStraightThreeGenText
\end{theorem}
\begin{figure}[tb]
  \centering
  \includegraphics[page=3]{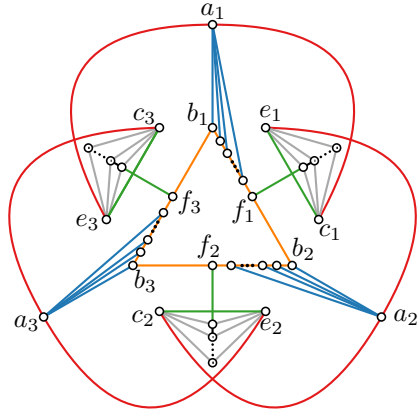}
  \caption{The construction for Theorem~\ref{thm:lower-straight-gen}}
  \label{fig:lowerbound-gen}
\end{figure}
\begin{proof}
 Consider the graph depicted in Fig.~\ref{fig:lowerbound-gen}. 
 The degree of $a_i,c_i$ and $e_i$ is $\Delta$.
 
 We can repeat the argument of the proof of Lemma~\ref{lem:lower-straight-3reg}. 
 There are only two differences: (i) instead of a single edge $a_ib_i$ there is a bundle 
 of edges incident to $a_i$. However the whole bundle lies in between $a_ic_i$ and $a_ie_{i+2}$
 and therefore the slopes of these edges are distinct. (ii) Instead of the pseudo-triangle
 with chains $a_ic_id_ie_ia_{i+1}$ we have now a sequence of nested chains given by the
 edges incident to $e_i$ and $c_i$. All these \enquote{subchains} are contained in the triangle $e_ic_ix_i$,
 where $x_i$ is the crossing between $a_ic_i$ and $a_{i+1}e_{i}$. Their slopes lie between the
 slopes of $a_ic_i$ and $e_ia_{i+1}$. This means that the three edge-bundles of subchains are separated
 by slopes. Clearly the slopes within each bundle have to be different. 
  
 Counting the slopes we see that we have the 18 slopes of the subgraph shown in Fig.~\ref{fig:lowerbound-3reg-a}
 and then there are 9 vertices, each incident to $\Delta-3$ new edges, that will need a new slope. In total
 we need $18 + 9 (\Delta -3) = 9 (\Delta -1)$ distinct slopes.

 To obtain an infinite family of graphs, we can again subdivide 
 $(b_1,f_3)$ and $(b_2,f_1)$ several times and connect pairs of the
 subdivision vertices.
 \end{proof}
\end{backInTime}

\end{document}